\newtheorem{theorem}{Theorem}
\newtheorem{assumption}{Assumption}
\newtheorem{lemma}{Lemma}
\theoremstyle{remark}
\newtheorem*{example*}{Example}
\numberwithin{theorem}{section}
\numberwithin{lemma}{section}
\numberwithin{equation}{section}
\newcommand{\norm}[1]{\left\Vert#1\right\Vert}
\newcommand{\abs}[1]{\left\vert#1\right\vert}
\newcommand{\blind}{0}
\begin{document}

\def\spacingset#1{\renewcommand{\baselinestretch}%
{#1}\small\normalsize} \spacingset{1}


\if0\blind
{
  \title{\bf Inference in a class of optimization problems: Confidence regions and finite sample bounds on errors in coverage probabilities}
  \author{Joel L. Horowitz\thanks{
An earlier version of this paper was presented at the conference ``Incomplete Models,'' which took place at Northwestern University in November 2018. We thank the conference participants, the associate editor, and two anonymous referees for helpful comments. This research was supported in part by the European Research Council (Grant No. ERC-2014-CoG-646917-ROMIA) and the UK Economic and Social Research Council (Grant No. ES/P008909/1 to the CeMMAP). Part of Joel L. Horowitz's research was carried out while he was a visitor to the Department of Economics at University College London.
}\hspace{.2cm}\\
    Department of Economics, Northwestern University\\
    and \\
    Sokbae Lee \\
    Department of Economics, Columbia University}
  \maketitle
} \fi

\if1\blind
{
  \bigskip
  \bigskip
  \bigskip
  \begin{center}
    {\LARGE\bf Inference in a class of optimization problems: Confidence regions and finite sample bounds on errors in coverage probabilities}
\end{center}
  \medskip
} \fi

\bigskip
\begin{abstract}
This paper describes three methods for carrying out non-asymptotic inference on partially identified parameters that are solutions to a class of optimization problems. Applications in which the optimization problems arise include estimation under shape restrictions, estimation of models of discrete games, and estimation based on grouped data. The partially identified parameters are characterized by restrictions that involve the unknown population means of observed random variables in addition to structural parameters. Inference consists of finding confidence intervals for functions of the structural parameters. Our theory provides finite-sample lower bounds on the coverage probabilities of the confidence intervals under three sets of assumptions of increasing strength. With the moderate sample sizes found in most economics applications, the bounds become tighter as the assumptions strengthen. We discuss estimation of population parameters that the bounds depend on and contrast our methods with alternative methods for obtaining confidence intervals for partially identified parameters. The results of Monte Carlo experiments and empirical examples illustrate the usefulness of our method.
\end{abstract}

\noindent%
{\it Keywords:}  partial identification, normal approximation, sub-Gaussian distribution, finite-sample bounds
\vfill

\newpage
\spacingset{1.45} 

\doublespacing



\section{Introduction}
\label{sec:intro}

This paper presents three methods for carrying out non-asymptotic inference about
a function of partially identified structural parameters of an econometric model.
The methods apply to models that impose shape restrictions \citep[e.g.,][]{FH:2015, HL:17}, a variety of partially identified models \citep[e.g.,][]{Manski:book, Tamer:10} that include discrete games \citep[e.g.,][]{Ciliberto:Tamer:09},
and models in which a continuous function is inferred from the average values of variables in a finite number of discrete groups \citep[e.g.,][]{BDM:98,KT16}. The specific inference problem consists of finding upper and lower bounds on the partially identified function
$f(\psi)$ under the restrictions $g_1(\psi,\mu) \leq 0$ and $g_2(\psi,\mu) = 0$, where $\psi$ is a vector of structural parameters, $\mu$ is a vector of unknown population means of observable random variables, $f$ is a known, real-valued function, and $g_1$ and $g_2$ are known possibly vector-valued functions. The inequality $g_1(\psi,\mu) \leq 0$ holds component-wise.

Most existing methods for inference in our framework are based on asymptotic approximations.  
They provide correct inference in the limit of an infinite sample size but do not provide information about the accuracy of the asymptotic approximations in finite samples. We provide three methods for obtaining finite-sample lower bounds on the coverage probability of a confidence interval for $f (\psi)$.
One method uses asymptotic approximations to obtain a
confidence interval. The other two methods do not use asymptotic approximations. All the methods provide information about the accuracy of finite-sample inference.

There are several approaches to carrying out non-asymptotic inference  in our framework.
Sometimes a statistic with a known finite-sample distribution makes finite-sample inference possible. For example, the \citetalias{CL34} confidence interval for a population probability is obtained by inverting the binomial probability distribution function. \citet{Manski:07} used the Clopper-Pearson interval to construct finite-sample confidence sets for counterfactual choice probabilities. Our methods apply to parameters that are not necessarily probabilities. A second existing method consists of using Hoeffding's inequality to obtain a confidence interval. \citet{STZ:2018} used this inequality to construct a confidence interval for a partially identified population moment. Hoeffding's inequality requires the underlying random variable to have a known bounded support. Our methods do not require the underlying random variable to have a known or bounded support.
\citet{Minsker2015} developed a confidence set for a
vector of population means by using a method called ``median of means.'' The bounds provided by this method are looser than the bounds provided by our methods. In addition, Minsker's method depends on certain user-selected tuning parameters. There are no data-based, efficient ways to choose these parameters in applications.

Our first method consists of making a normal approximation to the unknown distribution of the sample average. This method makes certain assumptions about low-order moments of the underlying random variable but does not restrict its distribution in other ways. A variety of results provide finite-sample upper bounds on the errors made by normal approximations. The  Berry-Ess\'{e}en inequality for the average of a scalar random variable is a well-known example of such a bound. \citet{Bentkus03} provides a bound on the error of a multivariate normal approximation to the distribution of the sample average of a random vector.
Other normal approximations for random vectors are given by \citet{spokoiny2015}; \citet{CCK:17}; and \citet{zhilova2020}; among others.  Our first method uses a bound on the error of the multivariate normal approximation that is due to \citet{Raic2019}. \citetalias{Raic2019} bound is a refined and tighter version of the bound of \citet{Bentkus03}. 

The bound of \citet{CCK:17} may be tighter than that of \citet{Raic2019} when the dimension of $\mu$ exceeds the
sample size, but the bound of \citet{Raic2019} is tighter when the dimension of $\mu$ is small compared
to the sample size, which is the case we treat in this paper. In contrast to conventional asymptotic inference approaches, our first method provides a finite-sample lower bound on the coverage probability of a confidence interval for the partially identified function $f(\psi)$.

The bound provided by our first method is loose in samples of the moderate sizes that occur in most economics applications, though not necessarily in very large samples. This is because it places only weak restrictions on the distribution of the underlying random variable, which may be far from normal. Our second method obtains a tighter bound in moderate size samples by assuming that the distributions of the components of the possibly vector-valued underlying random variable are sub-Gaussian. The sub-Gaussian assumption places stronger restrictions on the thickness of the tails of the relevant distributions than do the assumptions of the bound based on \citetalias{Raic2019} inequality. Our third method tightens the bound obtained with our second method by assuming that if the underlying random variable is vector-valued, then its distribution is sub-Gaussian in a vector sense. This assumption is stronger than the assumption that the components of a random
vector are individually sub-Gaussian. The bounds obtained with the second and third methods are identical if the underlying random variable is a scalar.

The bounds provided by all the methods depend on unknown population parameters. This dependence is unavoidable and can be removed only in special cases. The parameters of the bounds of the second and third methods can be estimated, however, which makes it possible to estimate the bounds in applications. We describe how to do this.
The resulting estimated bounds are asymptotic. They do not have finite-sample validity but can provide useful, though possibly rough, indications of the magnitudes of the finite-sample bounds. We present the results of Monte Carlo experiments that illustrate the relation between the exact finite-sample bounds and the consistent estimates.

Our work is broadly related to the literature on inference in partially identified models. 
\citet{Tamer:10}, \citet{canay_shaikh_2017}, \citet{ho_rosen_2017}, and \citet{Molinari} provide recent surveys. \citet{CCT:18} describe a Monte Carlo method
 for carrying out asymptotic inference for a class of models that includes our framework. 
 \cite{BCS:17}
and 
 \cite{KMS19} develop asymptotic inference methods for  subvectors  of partially identified parameters in moment inequality models.
 \citet{CCK} and \citet{BBC} construct confidence regions by inverting pointwise tests of a hypothesis about the (sub)vector of parameters that are partially identified by a large number of moment inequalities.
The inference problem we treat is different from those in the foregoing papers in that
 we focus
 on inference about parameters that are solutions to a class of optimization problems
  that is different from moment inequality problems.
 Our methods and results do not apply to moment inequalities.
\citet{Kline:Tamer} describe Bayesian inference in a class of models that includes a special case of the models we treat.
 Two more closely related papers are
\citet{HSS:2017} and \citet{Shi:Shum:15}, who propose a method for asymptotic inference about estimators defined by mathematical programs.
However, the class of estimation problems they treat is different from ours and overlaps with ours only under highly restrictive assumptions about both classes. \citet{HSS:2017} 
and \citet{Shi:Shum:15} do not provide finite-sample bounds on the errors of their asymptotic approximations. 

 Our work is also related to the econometrics literature on finite-sample inference. \,
\citet{STZ:2018} consider finite-sample inference in auction models. Their framework and method are very different from those in this paper. 
  In a different context, \cite{CHJ:09} and \cite{Rosen:Ura:19} propose finite-sample inference for quantile regression models and for the maximum score estimand, respectively.  Their methods and the classes of models they treat are distinct from ours.

The remainder of this paper is organized as follows. Section \ref{sec:toy:exmp} 
describes the inferential problem we treat, our methods for obtaining a confidence interval for $f(\psi)$, and the three methods
for obtaining a finite-sample lower bound on the coverage probability of a confidence interval.
Section \ref{sec:toy:exmp} 
also describes two empirical studies that illustrate how the inferential problem arises in applications.
 Section \ref{sec:algorithms} describes computational procedures for implementing our methods. 
 Section \ref{sec:Exmp:new}  presents
 an empirical application of the methods.
  Section \ref{sec:MC:new} reports the results of a Monte Carlo investigation of the numerical performance of our methods, and Section  \ref{sec:concl} gives concluding comments. 
 The proofs of theorems are presented in online Appendix \ref{app:proof}. 
 Online Appendices \ref{subsec:cont}--\ref{sec:Exmp:appendix} provide additional technical information about our methods, a description of Minsker's (2015) method, an additional empirical application, and additional Monte Carlo results.

\section{The Method}\label{sec:toy:exmp}	

Section \ref{subsec:inference} presents an informal description of inferential problem we address.
 Section \ref{subsec:examples} gives two examples of empirical applications in which the inferential problem arises. Section \ref{subsec:analysis} provides a formal description of our methods for constructing confidence intervals and bounds on coverage probabilities. 


\subsection{The Inferential Problem}\label{subsec:inference}

Let
 $\{ X_i : i=1,\ldots,n \}$ be an independent random sample from the distribution of the random vector  $X \in \mathbb{R}^p$
 for some finite $p \geq 1$. 
Define $\mu = \mathbb{E}(X)$ and $\Sigma = \text{cov}(X)$.
We assume that both exist. 
Let $\psi$ be a finite-dimensional parameter and $f (\psi )$ be a real-valued, known function. We assume throughout this section that $f (\psi )$ is only partially identified by the sampling process, though our results also hold if
$f (\psi )$ is point identified. We seek a confidence interval for $f (\psi )$,
which we define as a data-based interval that contains $f (\psi )$ with probability exceeding a known value. Let $g_1(\psi , \mu)$ and $g_2 (\psi , \mu)$ be
possibly vector valued known functions satisfying 
$g_1(\psi, \mu) \leq 0$ and $g_2(\psi, \mu) = 0$ component-wise.
Define
\begin{align}\label{obj-pop}
J_{+} := \max_{\psi} f(\psi)
 \; \text{ and } \; 
J_{-} := \min_{\psi} f(\psi)
\end{align}
subject to the component-wise constraints:
\begin{align}\label{g-constraint}
g_1(\psi, \mu) \leq 0, \;
g_2(\psi, \mu) = 0, \;
\psi &\in \Psi,
\end{align}
where $\Psi$ is a compact parameter set.
Online Appendix \ref{subsec:cont} extends \eqref{obj-pop}-\eqref{g-constraint} to the case in which  
$g_1$ and $g_2$ depend on a continuous covariate in addition to $\psi$ and $\mu$.

We are interested in the identification interval $J_{-} \leq f(\psi) \leq J_{+}$.
However, this interval cannot be calculated in applications because $\mu$ is unknown. Therefore, we estimate $\mu$ by the
sample average $\bar{X} = n^{-1} \sum_{i=1}^n X_i$, and we estimate $J_{+}$ and $J_{-}$ by
\begin{align}\label{obj0}
\hat{J}_{+}(\bar{X}) := \max_{\psi, m} f(\psi)
 \; \text{ and } \; 
\hat{J}_{-}(\bar{X}) := \min_{\psi, m} f(\psi)
\end{align}
subject to the constraints
\begin{subequations}\label{g-constraint-est}
\begin{align}
&g_1(\psi, m) \leq 0, 
g_2(\psi, m) = 0, 
\psi \in \Psi, \label{g-constraint-est-1} \\
&n^{1/2} (\bar{X} - m) \in \mathcal{S}, \label{S-constraint-est}
\end{align}
\end{subequations}
where $\mathcal{S}$ is 
a set for which $n^{1/2} (\bar{X} - \mu) \in \mathcal{S}$ with high probability.
Since $\mu$ is unknown, we replace it with the variable of optimization $m$ in \eqref{obj0}--\eqref{g-constraint-est-1} but require $m$ to satisfy \eqref{S-constraint-est}.
The resulting confidence interval for $f(\psi)$ is
\begin{align}\label{est-ci}
\hat{J}_{-}(\bar{X}) \leq f(\psi) \leq \hat{J}_{+}(\bar{X}).
\end{align}
This is also a confidence interval for the identified set containing $f(\psi)$.
Section \ref{subsec:analysis} provides three different finite-sample lower bounds on the probability that this interval contains $f(\psi)$. That is, Section \ref{subsec:analysis} provides three finite-sample lower bounds on
\begin{align}\label{est-ci-prob}
\mathbb{P} \left[ \hat{J}_{-}(\bar{X}) \leq J_{-} \leq f(\psi)  \leq J_{+} \leq \hat{J}_{+}(\bar{X}) \right].
\end{align}
The three bounds correspond to increasingly strong assumptions about the distribution of $X$ and are increasingly tight with samples of the moderate sizes found in most economics applications, though not necessarily with very large samples.

The two leading examples of $\mathcal{S}$ in constraint \eqref{S-constraint-est} are a box and an ellipsoid.
If $\mathcal{S}$ is a box,  let $D$ be a diagonal matrix whose diagonal elements are strictly positive.
For example, $D$ might be the diagonal elements of $\Sigma$ if $\Sigma$ is known
or the diagonal elements of a consistent estimate, $\widehat{\Sigma}$, if $\Sigma$ is unknown. 
Choose $\kappa_b(1-\alpha)$ so that 
the following holds, uniformly in $j = 1,\ldots,p$, with probability $1-\alpha$:
\begin{align*}
\left| \bar{X}_j - \mu_j \right|
\leq  \kappa_b(1-\alpha) ( n D_{j} )^{1/2},
\end{align*}
where the subscript $j$ denotes the $j$'th component of a vector or the $(j, j)$ component of a matrix.
In this case, \eqref{S-constraint-est} becomes $p$ constraints
and can be viewed as a sample analog of 
$\left| \mathbb{E}(X_j) - \mu_j \right| \leq 0$ 
with a relaxed constraint on $\mu_j$
for each $j = 1,\ldots,p$.
Section~\ref{sec:algorithms} presents methods for choosing  $\kappa_b(1-\alpha)$.

If  $\mathcal{S}$ is an ellipsoid, 
let $\Upsilon$ denote a positive definite $p \times p$ matrix, possibly
$\Sigma$ or $\widehat{\Sigma}$ if those matrices are non-singular, or the identity matrix. Choose 
$\kappa_e(1-\alpha)$ so that 
$$
n (\bar{X} - \mu)' \Upsilon^{-1} (\bar{X} - \mu) 
\leq \kappa_e(1-\alpha)
$$
with probability $1-\alpha$.
In this case, \eqref{S-constraint-est} is a single constraint.
Section~\ref{sec:algorithms}  presents a method for choosing  $\kappa_e(1-\alpha)$.
 When $\Sigma$ is difficult to estimate or is singular, we may use a sphere by choosing a critical value $\kappa_s(1-\alpha)$ such that 
\begin{align*}
n (\bar{X} - \mu)'  D^{-1} (\bar{X} - \mu) 
\leq \kappa_s(1-\alpha)
\ \ \text{ or } \ \
n (\bar{X} - \mu)'  (\bar{X} - \mu) 
\leq \kappa_s(1-\alpha)
\end{align*}
with probability $1-\alpha$.
In general, the implementation of our methods  is simpler if $\mathcal{S}$ is indexed by a scalar critical value $\kappa(1-\alpha)$.


It is straightforward to allow the objective function $f(\psi)$  to depend on $\mu$.
For the lower bound $\hat{J}_{-}(\bar{X})$, we introduce an auxiliary variable $t$
that acts as an upper bound on $f(\psi, \mu)$ and solve:
$
 \min_{\psi,m,t} t
$
subject to $f(\psi, m) \leq t$ and \eqref{g-constraint-est}.
For the upper bound 
$\hat{J}_{+}(\bar{X})$, we introduce a lower bound $s$ on  $f(\psi, \mu)$ and solve:
$
\max_{\psi,m,s} s
$
subject to $f(\psi, m) \geq s$ and \eqref{g-constraint-est}.
We focus on the original form \eqref{obj0}--\eqref{g-constraint-est} in the remainder of this paper because
the form with the objective
function $f(\psi, \mu)$ can be rewritten in the form \eqref{obj0}--\eqref{g-constraint-est} by redefining 
$f$, $g_1$ and $g_2$.

\subsection{Examples of Empirical Applications}\label{subsec:examples} 

\subsubsection*{\underline{Example 1}}
\citet{BDM:98} 
use grouped data to estimate labor supply effects of tax reforms in the United Kingdom.
To motivate our setup, we consider a simple model with which \citet{BDM:98} 
explain how to use grouped data to estimate $\beta$ in the following labor supply model with no income effect:
\begin{align}\label{est-eq} 
h_{it} = \alpha + \beta \log w_{it} + U_{it}.
\end{align}
In this model,
$h_{it}$ and $w_{it}$, respectively, are hours of work and  the post-tax  hourly wage rate of individual $i$ in  year $t$,
and 
 $U_{it}$ is an unobserved random variable that satisfies certain conditions.
 The parameter $\beta$ is identified by a relation of the form
 $
 \beta = \beta( h_{gt},  lw_{gt}),
 $
 where $h_{gt}$ and $lw_{gt}$ are the mean hours and log wages in year $t$ of individuals in group $g$. 
 There are 8 groups defined by four year-of-birth cohorts and level of education. The data span the period 1978-1992.
 
 A nonparametric  version of \eqref{est-eq} is 
$h_{it} = \xi( w_{it}) + U_{it}$,
where $\xi \in \Xi$ is an unknown continuous function and $\Xi$ is a function space.
A nonparametric analog of $\beta$ is the weighted average derivative
\[
\tilde{\beta} = \int \frac{\partial \xi(u)}{\partial u} w(u) du,
\]
where $w$ is a non-negative weight function. 
The average derivative $\tilde{\beta}$ is not identified non-parametrically by the mean values of hours and wages for finitely many groups and time periods. It can be partially identified, however, by imposing a shape restriction such as weak monotonicity on the labor supply function $\xi$. 
Assume, for example, that $\mathbb{E}[h_{it} - \xi(w_{it}) | g,t] = 0$. 
{\citet{BDM:98} 
set  
$\mathbb{E}[h_{it} - \xi(w_{it}) | g,t] = a_g + m_t$, where $a_g$ and $m_t$, respectively, are group and time fixed effects. These are accommodated by our framework but
we do not do this in the present discussion.}

The identification interval for $\tilde{\beta}$ is $\tilde{\beta}_{-} \leq \tilde{\beta} \leq \tilde{\beta}_{+}$, where
\begin{align}\label{bdm-obj}
\tilde{\beta}_{+} = \max_{\xi \in \Xi}  \int \frac{\partial \xi(u)}{\partial u} w(u) du
\; \text{ and } \; 
\tilde{\beta}_{-} = \min_{\xi \in \Xi} \int \frac{\partial \xi(u)}{\partial u} w(u) du
\end{align}
subject to 
\begin{subequations}\label{g-constraint-bdm}
\begin{align}
\xi(w_{gt}) - \xi(w_{g't'}) &\leq 0 \;\; \text{if $w_{gt} < w_{g't'}$}, \label{g1-constraint-bdm} \\
h_{gt} - \xi(w_{gt}) &= 0. \label{g2-constraint-bdm}
\end{align}
\end{subequations}

The continuous mathematical programming problem \eqref{bdm-obj}-\eqref{g-constraint-bdm} can be put into the
finite-dimensional framework of \eqref{obj0}-\eqref{g-constraint-est} by observing that under mild conditions on $\Xi$, $\xi$ can be approximated very accurately by the truncated infinite series
\begin{align}\label{sieve-approx}
\xi(u) \approx \sum_{j=1}^{K}  \psi_j \phi_j(u),
\end{align}
where the $\psi_j$'s are constant parameters, the $\phi_j$'s are basis functions for $\Xi$, and $K$ is a truncation point. 
In an estimation setting, $K$ can be an increasing function of the sample size, though we do not undertake this extension here. 
 {The approximation error of \eqref{sieve-approx} can be bounded.  Here, however, we assume that $K$ is sufficiently large to make the error negligibly small.}
The finite-dimensional analog of \eqref{bdm-obj}-\eqref{g-constraint-bdm} is
\begin{align}\label{bdm-obj-fm}
J_{+} = \max_{\psi_j: j=1,\ldots,K} \sum_{j=1}^{K} \psi_j \int \frac{\partial \phi_j(u)}{\partial u} w(u) du
\; \text{ and } \; 
J_{-} = \min_{\psi_j: j=1,\ldots,K} \sum_{j=1}^{K} \psi_j \int \frac{\partial \phi_j(u)}{\partial u} w(u) du
\end{align}
subject to 
\begin{subequations}\label{g-constraint-bdm-fm}
\begin{align}
\sum_{j=1}^{K} \psi_j \left[ \phi_j (w_{gt}) - \phi_j (w_{g't'}) \right] &\leq 0 \;\; \text{if $w_{gt} < w_{g't'}$}, \label{g1-constraint-bdm-fm} \\
h_{gt} - \sum_{j=1}^{K} \psi_j  \phi_j (w_{gt})  &= 0. \label{g2-constraint-bdm-fm}
\end{align}
\end{subequations}
$J_{+}$ and $J_{-}$ can be estimated, thereby obtaining $\hat{J}_{+}$ and $\hat{J}_{-}$, by replacing $h_{gt}$ and $w_{gt}$ in \eqref{bdm-obj-fm}-\eqref{g-constraint-bdm-fm} with within-group sample averages and adding the constraint \eqref{S-constraint-est}.

\subsubsection*{\underline{Example 2}}

\citet[HP hereinafter]{Ho:Pakes:2014}  use the theory of revealed preference to develop an estimator of hospital choices by individuals. HP use data on privately insured births in California. We consider a simplified version of the HP model.

Using the notation of HP,
let  $p(c, h)$ denote the price an insurer is expected to pay at hospital $h$  for a patient with medical condition $c$.  Let  $i$ index patients.  Then $c_i$  is the medical condition of patient $i$, and $p(c_i, h)$  is the price an insurer is expected to pay at hospital $h$  for patient  $i$.
Let $l_i$ denote patient $i$'s location, $l_h$ hospital's location, and $d(\cdot,\cdot)$ the distance between the two locations. 
For hospitals $h \neq h'$, define 
\begin{align*}
\Delta p(c_i, h, h') := p(c_i, h) - p(c_i, h')
\; \text{ and } \;
\Delta d( l_{i}, l_{h}, l_{h'} ) := d(  l_{i}, l_{h} ) - d(  l_{i}, l_{h'} ).
\end{align*}
That is, $\Delta p(c_i, h, h')$ is the price difference between hospitals $h$ and $h'$ 
given patient condition $c_i$ 
and $\Delta d( l_{i}, l_{h}, l_{h'} )$ is the distance difference between hospitals $h$ and $h'$ 
given patient location $l_i$. 
Define 
\begin{align*}
u_{i, i'} (\psi, h, h') := 
\psi \left[ \Delta p(c_i, h, h') + \Delta p(c_{i'}, h', h) \right] 
-  \left[ \Delta d( l_{i}, l_{h}, l_{h'} ) + \Delta d( l_{i'}, l_{h'}, l_{h} ) \right],
\end{align*}
where $\psi$ is a scalar parameter that determines price sensitivity relative to distance.
Note that the coefficient for distance is normalized to be $-1$.
$\psi$ is the key parameter in HP.

Define the four-dimensional vector of instruments based on distance:
\begin{align*}
z_{i, i'} (h, h') := 
\begin{pmatrix}
\max\{ \Delta d( l_{i}, l_{h}, l_{h'} ), 0\} \\
- \min \{ \Delta d( l_{i}, l_{h}, l_{h'} ), 0\} \\
\max\{ \Delta d( l_{i'}, l_{h'}, l_{h} ), 0\} \\
- \min \{ \Delta d( l_{i'}, l_{h'}, l_{h'} ), 0\} 
\end{pmatrix}.
\end{align*} 
Here, the instruments are based on distance measures
and constructed to be positive to preserve the signs of the inequalities below
in \eqref{HP-identifying}.

Let $S(h,h',s)$ be the set of patients with severity $s$ who chose hospital $h$ but had hospital $h'$ in their choice set. 
The identifying assumption in HP is that 
\begin{align}\label{HP-identifying}
\mathbb{E} \left[ 
u_{i, i'} (\psi, h, h') z_{i, i'} (h, h')
\Big |
i \in S(h,h',s), i' \in S(h',h,s)
\right] \geq 0
\end{align}
for all $s, h, h'$ such that $h \neq h'$. 
We can rewrite \eqref{HP-identifying} as
\begin{align*}
\psi \mu_p (h,h',s) - \mu_d(h,h',s) \geq 0 \; \text{ for all $(h,h',s)$ such that $h \neq h'$},
\end{align*}
where 
\begin{align*}
\mu_p (h,h',s) &:= 
\mathbb{E} \left[ 
z_{i, i'} (h, h') \{
\Delta p(c_i, h, h') + \Delta p(c_{i'}, h', h)  \} 
\Big |
i \in S(h,h',s), i' \in S(h',h,s) \right], \\ 
\mu_d (h,h',s) &:=
\mathbb{E} \left[ 
z_{i, i'} (h, h') \{
\Delta d( l_{i}, l_{h}, l_{h'} ) + \Delta d( l_{i'}, l_{h'}, l_{h} )  \} 
\Big |
i \in S(h,h',s), i' \in S(h',h,s) \right].
\end{align*}
To see the connection between our general framework and HP's inequality estimator, let 
$f(\psi) = \psi$, $\mu = (\mu_p, \mu_d)$, and $g_1$ be a collection of inequalities such that  
\begin{align*}
g_1(\psi, \mu) =  \mu_d(h,h',s) - \psi \mu_p (h,h',s) \leq 0  \; \text{ for all $(h,h',s)$ such that $h \neq h'$}.
\end{align*}
There is no element in $g_2$ (no equality constraints here). 
Since each element in $\mu$ can be estimated by a suitable sample mean,  our general framework includes HP's estimator as a special case.

\subsection{Analysis}\label{subsec:analysis}

This section presents our three methods for forming finite-sample lower bounds on
$$\mathbb{P} \left[ \hat{J}_{-}(\bar{X}) \leq J_{-} \leq f(\psi) \leq J_{+} \leq \hat{J}_{+}(\bar{X}) \right].$$
The three bounds make assumptions of differing strengths about the distribution of $X$. The bounds are tighter in samples of moderate size with stronger assumptions. All proofs are in Online Appendix~\ref{app:proof}. We begin with the following theorem, which applies to all the methods and forms the basis of our approach.

\begin{theorem}\label{thm2}
Assume that $g_1(\psi, \mu) \leq 0$ and $g_2(\psi, \mu) = 0$ for some $\psi$. Then
\begin{align}\label{main-result}
\mathbb{P} \left[ \hat{J}_{-}(\bar{X}) \leq J_{-} \leq f(\psi) \leq J_{+} \leq \hat{J}_{+}(\bar{X}) \right] \geq \mathbb{P} \left[ n^{1/2} (\bar{X} - \mu) \in \mathcal{S} \right].
\end{align}
\end{theorem}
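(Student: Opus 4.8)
The plan is to show that the single event $E := \{\, n^{1/2}(\bar X - \mu) \in \mathcal{S} \,\}$ is contained in the event whose probability appears on the left-hand side of \eqref{main-result}; the claimed inequality then follows at once by monotonicity of probability, $\mathbb{P}[\,\cdot\,] \ge \mathbb{P}(E)$. The one idea behind the containment is that, on $E$, the true mean vector $\mu$ is itself a \emph{feasible} value of the optimization variable $m$ in the estimated programs \eqref{obj0}--\eqref{g-constraint-est} defining $\hat J_{\pm}(\bar X)$, because the added constraint \eqref{S-constraint-est} becomes $n^{1/2}(\bar X - \mu) \in \mathcal{S}$, which holds on $E$ by definition.

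Concretely, I would first fix a $\psi$ satisfying the hypothesis of the theorem, i.e.\ $g_1(\psi,\mu) \le 0$, $g_2(\psi,\mu) = 0$ (and $\psi \in \Psi$); such a $\psi$ is feasible in the population programs \eqref{obj-pop}--\eqref{g-constraint}, so $J_{-} \le f(\psi) \le J_{+}$ holds deterministically. Next, working on $E$, I would observe that for \emph{every} $\tilde\psi$ feasible in the population programs, the pair $(\tilde\psi,\mu)$ is feasible in the estimated programs: with $m = \mu$, the constraints $g_1(\tilde\psi,m)\le 0$, $g_2(\tilde\psi,m)=0$, $\tilde\psi\in\Psi$ are exactly the population constraints, and \eqref{S-constraint-est} reads $n^{1/2}(\bar X - \mu)\in\mathcal{S}$, true on $E$. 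Since $\hat J_{+}(\bar X)$ maximizes $f$ over a feasible set that, on $E$, contains all such $(\tilde\psi,\mu)$, we obtain $\hat J_{+}(\bar X) \ge \sup\{ f(\tilde\psi) : \tilde\psi \text{ population-feasible}\} = J_{+}$; symmetrically, minimizing over a larger set gives $\hat J_{-}(\bar X) \le J_{-}$. Chaining these two bounds with $J_{-} \le f(\psi) \le J_{+}$ yields $\hat J_{-}(\bar X) \le J_{-} \le f(\psi) \le J_{+} \le \hat J_{+}(\bar X)$ on $E$, which is the desired containment; taking probabilities completes the proof.

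I do not expect a genuine obstacle: this is essentially a feasibility-transfer argument, and the write-up should be short. The only points needing care are (i) keeping the bookkeeping straight that $J_{\pm}$ are deterministic while $\hat J_{\pm}(\bar X)$ are random, so that set containment is the appropriate device; (ii) attainment of the optima --- since $\Psi$ is compact one may assume the max and min are achieved, but the argument in fact goes through verbatim with suprema and infima because it only compares feasible sets; and (iii) checking that freeing $m$ and pinning it to $\mu$ genuinely \emph{enlarges} the feasible set relative to the population program (the case of empty $g_2$, as in Example~2, requires no modification).
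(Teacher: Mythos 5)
Your proposal is correct and takes essentially the same route as the paper's one-line proof: on the event $n^{1/2}(\bar{X}-\mu)\in\mathcal{S}$, feasibility transfers by setting $m=\mu$, so the estimated programs optimize over a (weakly) larger feasible region than the population programs, yielding $\hat{J}_{-}(\bar{X}) \leq J_{-} \leq f(\psi) \leq J_{+} \leq \hat{J}_{+}(\bar{X})$ on that event, and monotonicity of probability finishes. If anything, your write-up states the set containment in the correct direction (population-feasible $\psi$ paired with $\mu$ is estimated-feasible), whereas the paper's wording asserts the reverse containment even though it draws the same (correct) chain of inequalities.
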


Now define
\begin{align*}
Z_i := X_i - \mu \;\; \text{ and } \;\; 
\bar{Z} := n^{-1/2} \sum_{i=1}^n Z_i = n^{-1/2} \sum_{i=1}^n (X_i - \mu).
\end{align*}
Then $\mathbb{E}( \bar{Z} ) = 0$. Note that  $\Sigma = 
\text{cov}(Z_i) = \text{cov}(\bar{Z}) = \text{cov}(X)$. 
 We make the following assumption throughout the remainder of the paper.

\begin{assumption}\label{a1}
(i) $\{ X_i : i=1,\ldots,n \}$ is an independent random sample from the distribution of $X$.
(ii) $\mathcal{S}$ is compact and convex.
(iii) $\Psi$ is compact. 
(iv) $f(\psi)$ is bounded on $\Psi$.
(v) $g_1(\psi, \mu) \leq 0$ and $g_2(\psi, \mu) = 0$ for some $\psi$.
\end{assumption}

\subsubsection{Known $\Sigma$}\label{sec:known:sigma}

Suppose for the moment that $\Sigma$ is known. 
Section~\ref{sec:unknown:sigma} 
discusses the case in which $\Sigma$ is unknown. 
If  $\Sigma$ is non-singular, 
let $[ \Sigma^{-1/2} (X_i - \mu) ]_j$ denote the $j$'th component of $\Sigma^{-1/2} (X_i - \mu)$.

\subsubsection*{\underline{Method 1}}

This method approximates the distribution of $\bar{Z}$ by a normal distribution.  
To do this, make the following assumption.

\begin{assumption}\label{a2}
(i) $\Sigma$ is non-singular, and its components are all finite.
(ii) There is a constant $\overline{\mu}_3 < \infty$ such that
$\mathbb{E} \left( \abs{ [ \Sigma^{-1/2} (X_i - \mu) ]_j }^3 \right) \leq \overline{\mu}_3$
 for all $i=1,\ldots,n$ and $j=1,\ldots,p$. 
\end{assumption}

Define 
the independent random $p$-vectors
$W_i \sim N(0, \Sigma)$ $(i=1,\ldots,n)$ and
$\bar{W} := n^{-1/2} \sum_{i=1}^n W_i \sim N(0, \Sigma)$.
The multivariate generalization of the
Lindeberg-L\'{e}vy central limit theorem shows that $\bar{Z}$ is asymptotically distributed as
$N(0, \Sigma)$, so the distribution of $\bar{Z}$ can be approximated by that of $W$.
The following theorem bounds the error of this approximation.

\begin{theorem}\label{lem1}
Let Assumptions \ref{a1}(i),(ii) and \ref{a2} hold. Then
\[
\abs{ \mathbb{P} (\bar{Z} \in \mathcal{S}) - \mathbb{P} (W \in \mathcal{S}) }
\leq \frac{(42p^{1/4} + 16) p^{3/2} \overline{\mu}_3}{ n^{1/2}}.
\] 
\end{theorem}

Theorem~\ref{lem1} approximates the distribution of  $\bar{Z}$ by a multivariate normal distribution and uses a multivariate generalization of the Berry-Ess\'{e}en theorem \citep{Raic2019} to bound the approximation error.  Theorem~\ref{lem1} implies that for any $0 < \alpha < 1$,
\begin{align}\label{CI-chi2}
\mathbb{P} \left\{ n (\bar{X} - \mu)' \Sigma^{-1} (\bar{X} - \mu) \leq \kappa_{\chi^2_{p}}(1-\alpha)  \right\}
\geq (1-\alpha) - \textrm{B}(n,p,\overline{\mu}_3),
\end{align}
where $\kappa_{\chi^2_{p}}(1-\alpha)$ is the $(1-\alpha)$ quantile of the chi-square distribution with $p$ degrees of freedom
and
\begin{align*}
\textrm{B}(n,p,\overline{\mu}_3)  :=  \frac{(42p^{1/4} + 16) p^{3/2} \overline{\mu}_3}{ n^{1/2}}.
\end{align*}
The term $\textrm{B}(n,p,\overline{\mu}_3)$ in \eqref{CI-chi2}  is asymptotically negligible but can be large
in samples of moderate
size because it accommodates ``worst case'' distributions of $\bar{X}$ that may be far from normal. 
If $n$ is large enough that $\textrm{B}(n,p,\overline{\mu}_3) < \alpha $,
then it follows from  \eqref{CI-chi2} that 
\begin{align}\label{CI-chi2-finite}
\mathbb{P} \left\{ n (\bar{X} - \mu)' \Sigma^{-1} (\bar{X} - \mu) \leq \kappa_{\chi^2_{p}} \big[ 1-\alpha + \textrm{B}(n,p,\overline{\mu}_3) \big]  \right\}
\geq 1-\alpha.
\end{align}
It follows from Theorem~\ref{thm2} that \eqref{CI-chi2} and \eqref{CI-chi2-finite} provide lower bounds on the coverage probabilities of confidence intervals for $f(\psi)$ when $\mathcal{S}$ is an ellipsoid.


Table~\ref{tab:bentkus} shows numerical values of 
the  bound $(1-\alpha) - \textrm{B}(n,p,\overline{\mu}_3)$ 
and critical value $\kappa_{\chi^2_{p}} \big[ 1-\alpha + \textrm{B}(n,p,\overline{\mu}_3) \big]$
for different values of $n$ and $p$ at $\alpha = 0.05$
and $\overline{\mu}_3 = 2$.
To have a bound close to $1-\alpha$ and a finite critical value, $n$ must be very large, especially if $p$ is large. This
is because \eqref{CI-chi2}  accommodates worst case distributions of $\bar{X}$.
Methods 2 and 3, which are discussed next in this section, provide tighter bounds and smaller critical values when $n$ is smaller, though Method 1 can provide a smaller critical value when $n$ is very large and $p$ is small.
However, Method 1 is hard to use in applications even when $n$ is large 
if $\Sigma$ and $\overline{\mu}_3$ are
unknown, because the resulting bounds depend on population parameters that are difficult to estimate. This problem is discussed in Section~\ref{sec:unknown:sigma}.

\begin{table}[htbp]
\centering
\caption{Values of the Bound  and Critical Value}
\label{tab:bentkus}
\medskip
\begin{tabular}{rlllll}
  \hline
 & $p=1$ & 2 & 3 & 4 & 5 \\ 
  \hline
$n=10^5$ & 0.58 ($\infty$) & 0.00 ($\infty$) & 0.00 ($\infty$) & 0.00 ($\infty$) & 0.00 ($\infty$) \\ 
$10^6$ & 0.83 ($\infty$) & 0.58 ($\infty$) & 0.21 ($\infty$) & 0.00 ($\infty$) & 0.00 ($\infty$) \\ 
$10^7$ & 0.91  (6.13) & 0.83 ($\infty$) & 0.72 ($\infty$) & 0.57 ($\infty$) & 0.39 ($\infty$) \\ 
$10^8$ & 0.94 (4.29) & 0.91 (8.73) & 0.88 ($\infty$) & 0.83 ($\infty$) & 0.77 ($\infty$) \\ 
$10^9$ & 0.95 (3.97) & 0.94 (6.53) & 0.93 (9.21) & 0.91 (12.88) & 0.89 ($\infty$) \\ 
   \hline
\end{tabular}

\medskip

\begin{minipage}{0.8\textwidth} 
{Note. 
The table shows the values of the bound
$\max[ 0, (1-\alpha) - \textrm{B}(n,p,\overline{\mu}_3)]$
and critical value $\kappa_{\chi^2_{p}} \big[ 1-\alpha + \textrm{B}(n,p,\overline{\mu}_3) \big]$
(in parentheses)  for different  $n$ and $p$
with $\alpha = 0.05$ and $\overline{\mu}_3 = 2$.
\par}
\end{minipage}
\end{table}

\subsubsection*{\underline{Method 2}}

Method 2 obtains bounds that are much tighter than those of Method 1 when $n$ is smaller than in Table~\ref{tab:bentkus}, and Method 2 does not require $\Sigma$ to be invertible. This is accomplished by assuming that the distributions of the components of $X$ are sub-Gaussian. Specifically, make the following assumption.

\begin{assumption}\label{a4}
Let $\Upsilon$ denote a non-singular, non-stochastic $p \times p$ matrix with finite elements.  
For any such $\Upsilon$; 
all $i=1,\ldots,n$; and all $j=1,\ldots,p$;
there are finite constants $\sigma_j^2 > 0$ such that the distribution of 
 $\tilde{Z}_{ij} := [ \Upsilon^{-1/2} (X_i - \mu) ]_j$ is sub-Gaussian with variance proxy $\sigma_j^2$. That is, $\mathbb{E} [ \exp( \lambda \tilde{Z}_{ij} ) ] \leq \exp (\sigma_j^2 \lambda^2 / 2)$ for all $\lambda \in \mathbb{R}$; $i=1,\ldots,n$; and $j=1,\ldots,p$.
\end{assumption}

Assumption~\ref{a4} requires that the distribution of $\tilde{Z}_{ij}$ be thin-tailed. 
Sub-Gaussian random variables include Gaussian, Rademacher, and bounded random variables as special cases. See,  e.g., \citet{wainwright2019book}.
There is a tradeoff between $\Upsilon$ and $\sigma_j^2$.
In particular, $\sigma_j^2$ may be larger if $\Upsilon$ is the $p \times p$ identity matrix than if $\Upsilon = \Sigma$ and $\Sigma$ is non-singular. 

Define  $\sigma^2 := \max_{1 \leq j \leq p} \sigma_j^2$.
The following theorem, combined with Theorem~\ref{thm2}, provides  a lower bound on a confidence interval for $f(\psi)$ when $\mathcal{S}$ is an ellipsoid.

\begin{theorem}\label{lem-SG-1}
Let Assumptions \ref{a1}(i) and \ref{a4} hold. Then,
for any $t > 0$, 
\begin{align*}
\mathbb{P} \left\{ n (\bar{X} - \mu)' \Upsilon^{-1} (\bar{X} - \mu) > t \right\}
\leq 
2\sum_{j=1}^p 
\exp \left( - \frac{t}{2 p \sigma_j^2} \right).
\end{align*}
In addition,  for any $0 < \alpha < 1$,
\begin{align*}
\mathbb{P} \left\{ \frac{n}{\sigma^2} (\bar{X} - \mu)' \Upsilon^{-1} (\bar{X} - \mu) \leq \kappa_{SG, 1}(1-\alpha)  \right\}
\geq (1-\alpha),
\end{align*}
where 
$\kappa_{SG, 1}(1-\alpha) := 2 p \cdot  \log (2p/\alpha)$.
\end{theorem}

Theorem~\ref{lem-SG-1} and Method 2 make use of the sub-Gaussianity of $\tilde{Z}_{ij}$, whereas Theorem~\ref{lem1} and Method 1
allow the tails of the distribution of $X$ to be thicker than sub-Gaussian tails. The
critical values of Methods 1 and 2 are compared later in this section after the description of Method
3. Estimation of $\sigma^2$ is discussed in Section~\ref{sec:unknown:sigma}.

\subsubsection*{\underline{Method 3}}

Method 3 makes the stronger assumption that the distribution of $X$ is sub-Gaussian in a vector sense. Specifically, Method 3 makes the following assumption.

\begin{assumption}\label{a5}
Let $\Upsilon$ denote a non-singular $p \times p$ matrix with finite elements.  
There is a finite constant $\sigma^2 > 0$ such that 
\begin{align*}
\mathbb{E} \left[ \exp ( \lambda'  \Upsilon^{-1/2} \bar{Z}) \right] 
\leq \exp ( \lambda' \lambda \sigma^2 / 2)
\end{align*}
for all $\lambda \in \mathbb{R}^p$, where
$\bar{Z} = n^{-1/2} \sum_{i=1}^n (X_i - \mu)$.
\end{assumption}

Assumption~\ref{a5} is stronger than Assumption~\ref{a4}, because Assumption~\ref{a5} requires the entire vector $X$ to be sub-Gaussian.  
If $X$ is multivariate normal and $\Upsilon = \Sigma$, 
Assumption~\ref{a5} holds with $\sigma^2 = 1$. 
In general, however, it is difficult to find simple conditions under which
Assumption~\ref{a5} is satisfied
without assuming that the elements of $\bar{Z}$ are independent of one another.

An application of Theorem 2.1 of \citet{HKT:2021} gives the following theorem
which, combined with Theorem~\ref{lem1}, provides a lower bound on the coverage probability of a confidence interval for $f(\psi)$ when $\mathcal{S}$ is an ellipsoid.

\begin{theorem}\label{lem-SG-2}
Let Assumptions \ref{a1}(i) and \ref{a5} hold. 
 Then, for any $0 < \alpha < 1$,
\begin{align*}
\mathbb{P} \left\{ \frac{n}{\sigma^2} (\bar{X} - \mu)' \Upsilon^{-1} (\bar{X} - \mu) \leq \kappa_{SG, 2}(1-\alpha)  \right\}
\geq 1-\alpha,
\end{align*}
where 
$\kappa_{SG, 2}(1-\alpha) := p + 2 \sqrt{ p \cdot \log (1 / \alpha)} + 2 \log (1 / \alpha)$.
\end{theorem}

This theorem and Method 3 
yield a smaller critical value and confidence set than Theorem~\ref{lem-SG-1} and Method 2 do, but they require the stronger Assumption 4. Table~\ref{tab:sg-cv} shows the critical values of Methods 2 and 3 and chi-square critical values with p degrees of freedom. The critical
values are all for $\alpha = 0.05$. None of the critical values depends on $n$. The chi-square critical value achieves an asymptotic coverage probability of 0.95 and yields the smallest confidence set but does not ensure a finite-sample coverage probability of at least 0.95. The sub-Gaussian critical values and resulting confidence sets are larger but ensure finite-sample coverage probabilities of at least 0.95 under the regularity conditions of Theorems~\ref{lem-SG-1} and ~\ref{lem-SG-2}. Assumption~\ref{a4} is easier to satisfy, and its variance proxy is easier to estimate in applications, but the critical value of Method 2 increases more rapidly than the critical value of Method 3 as $p$ gets large.

\begin{table}[htbp]
\centering
\caption{Different Critical Values}
\label{tab:sg-cv}
\medskip

\begin{tabular}{rrrr}
  \hline
$p$  & $\kappa_{\chi^2_{p}}(1-\alpha)$ & $\kappa_{SG, 1}(1-\alpha)$ & $\kappa_{SG, 2}(1-\alpha)$ \\ 
  \hline
 1 & 3.84 & 7.38 & 10.45 \\ 
 2 & 5.99 & 17.53 & 12.89 \\ 
 3 & 7.81 & 28.72 & 14.99 \\ 
 4 & 9.49 & 40.60 & 16.91 \\ 
 5 & 11.07 & 52.98 & 18.73 \\ 
   \hline
\end{tabular}

\medskip

\begin{minipage}{0.6\textwidth} 
{Note. The table shows different critical values at $\alpha = 0.05$.
$\kappa_{\chi^2_{p}}(1-\alpha)$ is the chi-square critical value with $p$ degrees of freedom;
$\kappa_{SG, 1}(1-\alpha) = 2 p \cdot  \log (2p/\alpha)$;
and
$\kappa_{SG, 2}(1-\alpha) = p + 2 \sqrt{ p \cdot \log (1 / \alpha)} + 2 \log (1 / \alpha)$.
\par}
\end{minipage}
\end{table}

The critical values of Methods 2 and 3 in Table~\ref{tab:sg-cv} can be compared with those of Method 1 in Table~\ref{tab:bentkus}. The critical value of Method 1 converges to the chi-square critical value as $n \rightarrow \infty$. The critical values of Methods 2 and 3 do not depend on $n$. Therefore, the critical value of Method 1 is smaller than those of Methods 2 and 3 when $n$ is very large and $p$ is small enough. However,
the critical value of Method 1 is infinite with moderate values of $n$, whereas the critical values of Methods 2 and 3 are finite at all values of $n$.

\subsubsection{Unknown $\Sigma$ and sub-Gaussian variance proxy $\sigma^2$}\label{sec:unknown:sigma} 

In applications, $\Sigma$ and the sub-Gaussian variance proxy $\sigma^2$ 
are unknown except in special cases. This section explains how to estimate these quantities and discusses the effect of estimation on the bounds presented in 
Section \ref{sec:known:sigma}.

We begin with the bound of Theorem~\ref{lem1} and~\eqref{CI-chi2}.
Let $\widehat{\Sigma}$ be the following estimator of $\Sigma$:
\begin{align*}
\widehat{\Sigma} 
&:= n^{-1} \sum_{i=1}^n X_i X_i' - \bar{X} \bar{X}'.
\end{align*}
Let $\Sigma_{jk}^{-1}$ denote the $(j,k)$ component of $\Sigma^{-1}$.
For each $j,k = 1,\ldots,p$, let
\begin{align*}
\Sigma_{jj} := \mathbb{E} \left[ ( X_{ij} - \mu_j )^2 \right]
\; \text{ and } \;
\Gamma_{jk} := \mathbb{E} \left[ \{ (X_{ij} - \mu_j)(X_{ik} - \mu_k) - \Sigma_{jk} \}^2 \right]. 
\end{align*}

\begin{assumption}\label{a3}
(i) There is a constant $C_\Sigma < \infty$  such that $\abs{ \Sigma_{jk}^{-1} } \leq C_\Sigma$ for each $j,k = 1,\ldots,p$. 
(ii) There is a finite constant $\kappa_0 \geq 1$ such that
\begin{align}\label{kappa0}
\Sigma_{jj} \leq \kappa_0
\ \ \text{ and } \ \
\Gamma_{jk} \leq \kappa_0
\end{align}
for every $j,k = 1,\ldots,p$. 
(iii)  There is a finite constant $\kappa_1$ such that 
\begin{subequations}\label{kappa1-min}
\begin{align}
\mathbb{E} \left[ \abs{ X_{ij} - \mu_j }^r \right] &\leq \kappa_1^{r-2} \frac{r!}{2} \Sigma_{jj}, \label{kappa1-min-1} \\
\mathbb{E} \left[ \abs{ (X_{ij} - \mu_j)(X_{ik} - \mu_k) - \Sigma_{jk} }^r \right] &\leq \kappa_1^{r-2} \frac{r!}{2} \Gamma_{jk} \label{kappa1-min-2} 
\end{align}
\end{subequations}
for every $r=2,3,\ldots$ and $j,k = 1,\ldots,p$.
\end{assumption}

Assumption~\ref{a3} is stronger than 
Assumption~\ref{a2}.
In particular, \eqref{kappa1-min-2} 
implies that the $(X_{ij} - \mu_j)(X_{ik} - \mu_k)$ is sub-exponential.  Therefore, $X_{ij}$ is sub-Gaussian  because
a random
variable is sub-Gaussian if and only if its square is sub-exponential 
\citep[][Lemma 2.7.6]{vershynin2018high}.
Also, the product of two sub-Gaussian variables is sub-exponential
\citep[][Lemma 2.7.7]{vershynin2018high}.
We use Assumption \ref{a3}(iii) to apply Bernstein's inequality 
to the bound of Method 1
\citep[see, e.g., Lemma 14.13 of][]{buhlmann2011statistics}.

Define the random vector  
$\widehat{W} \sim N(0, \widehat{\Sigma})$.
We approximate the distribution of $W$ by the distribution of  
$\widehat{W}$ with $\widehat{\Sigma}$ treated as a non-stochastic matrix. 
Define
$\mathbf{P} (\mathcal{S}, \Sigma) := \mathbb{P} (W \in \mathcal{S})$ for $W \sim N(0, \Sigma)$
and
\begin{align}\label{choice:rt:final}
r_n(t) & :=
  8  \sqrt{\frac{2 \kappa_0 t}{n} }, \
w_n(t) := C_\Sigma p^3 2^{p+1} r_n(t),
\ \text{ and } \
\kappa^* := \min \left\{ (4\kappa_1)^{-1}, 2 \kappa_1^{-2}, (32 \kappa_0)^{-1} \right\}.
\end{align} 
The following lemma gives a finite-sample bound on the error of the approximation.

\begin{lemma}\label{lem2}
Let Assumptions \ref{a1}, \ref{a2}, and \ref{a3} hold and 
\begin{align}\label{tech:condition:further}
\log(2p) \leq  t \leq \kappa^* n.
\end{align}
Then, 
\[
\abs{   \mathbf{P} (\mathcal{S}, \widehat{\Sigma}) - \mathbb{P} (W \in \mathcal{S}) }
\leq w_n(t)
\] 
with probability at least $1- 2 e^{-t}$.
\end{lemma}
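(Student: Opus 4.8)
The plan is to split the claim into a probabilistic bound on the covariance estimator and a deterministic bound comparing two centered Gaussians on $\mathcal{S}$. Concretely: (Step~1) show $\norm{\widehat{\Sigma}-\Sigma}_{\max}\le r_n(t)$ with probability at least $1-2e^{-t}$, where $\norm{\cdot}_{\max}$ is the entrywise maximum; (Step~2) show that \emph{for every} positive-definite matrix $A$ with $\norm{A-\Sigma}_{\max}\le r_n(t)$ one has $\abs{\mathbf{P}(\mathcal{S},A)-\mathbb{P}(W\in\mathcal{S})}\le w_n(t)$; then apply Step~2 with $A=\widehat{\Sigma}$ on the event supplied by Step~1. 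For Step~1, set $Y_{ij}:=X_{ij}-\mu_j$ and use the identity
\[
\widehat{\Sigma}_{jk}-\Sigma_{jk}=\frac1n\sum_{i=1}^n\bigl(Y_{ij}Y_{ik}-\Sigma_{jk}\bigr)-\Bigl(\frac1n\sum_{i=1}^n Y_{ij}\Bigr)\Bigl(\frac1n\sum_{i=1}^n Y_{ik}\Bigr).
\]
The first term is an average of i.i.d.\ mean-zero variables whose absolute moments satisfy exactly Bernstein's hypothesis, by the second line of \eqref{kappa1-min}, so Bernstein's inequality (the form cited after Assumption~\ref{a3}) bounds it by a multiple of $\sqrt{t/n}$ on an event of probability at least $1-2e^{-t}$; the restriction $t\le\kappa^* n$, with $\kappa^*\le\min\{\kappa_1^{-1},2\kappa_1^{-2}\}$, is precisely what keeps the deviation in the sub-Gaussian rather than the sub-exponential regime of that inequality. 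Each factor of the second term is a sample mean of the $Y_{ij}$, again bounded by a multiple of $\sqrt{t/n}$ via the first line of \eqref{kappa1-min}, so the product is $O(t/n)$ and, since $t\le\kappa^* n\le n$, of strictly smaller order. A union bound over the $O(p^2)$ distinct entries and the $p$ coordinate means, absorbed into the constant using $t\ge\tfrac{9}{4}\log(2p)$, gives $\norm{\widehat{\Sigma}-\Sigma}_{\max}\le r_n(t)=8\sqrt{2t/n}$ with probability at least $1-2e^{-t}$.

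For Step~2, fix a positive-definite $A$ with $\norm{A-\Sigma}_{\max}\le r_n(t)$; if $r_n(t)$ is too large for the estimates below then $w_n(t)>1$ and the bound is vacuous, so we may assume $r_n(t)$ small --- this is the role of the branch $\kappa^*\le 1/2$ in \eqref{tech:condition:further}. Let $\rho(x):=\frac{dN(0,A)}{dN(0,\Sigma)}(x)=\bigl(\det\Sigma/\det A\bigr)^{1/2}\exp\!\bigl(-\tfrac12 x'(A^{-1}-\Sigma^{-1})x\bigr)$ be the density ratio. Then
\[
\abs{\mathbf{P}(\mathcal{S},A)-\mathbb{P}(W\in\mathcal{S})}\ \le\ \int_{\mathbb{R}^p}\abs{\rho(x)-1}\,dN(0,\Sigma)(x)\ =\ \mathbb{E}_W\bigl[\,\abs{\rho(W)-1}\,\bigr],\qquad W\sim N(0,\Sigma),
\]
so the convex set $\mathcal{S}$ drops out and the problem reduces to comparing two centered Gaussians. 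Writing $\rho(x)=e^{u(x)}$ with $u(x)=\tfrac12\log(\det\Sigma/\det A)-\tfrac12 x'(A^{-1}-\Sigma^{-1})x$ and using $\abs{e^u-1}\le\abs{u}e^{\abs{u}}$, it remains to control $\abs{\log(\det\Sigma/\det A)}$ and the Gaussian moments, under $N(0,\Sigma)$, of the quadratic form $x'(A^{-1}-\Sigma^{-1})x$.

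Both are reduced to the entrywise covariance error. From $A^{-1}-\Sigma^{-1}=\Sigma^{-1}(\Sigma-A)A^{-1}$ together with entrywise bounds on $A^{-1}$ (obtained from the cofactor formula, Hadamard's inequality, and $\abs{\Sigma_{jk}^{-1}}\le C_\Sigma$ of Assumption~\ref{a2}(iii)), the eigenvalues of $\Sigma^{-1/2}A\Sigma^{-1/2}$ lie within $O(p^2 C_\Sigma r_n(t))$ of $1$ --- which controls $\abs{\log(\det\Sigma/\det A)}$ --- and $\abs{x'(A^{-1}-\Sigma^{-1})x}\le O(p^2 C_\Sigma r_n(t))\,\norm{\Sigma^{-1/2}x}^2$, so the moment generating function of this quadratic form under $N(0,\Sigma)$, a product of $p$ factors each at most $(1-\tfrac12)^{-1/2}$ in the small-$r_n(t)$ regime, is $O(2^{p/2})$. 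The single power of $r_n(t)$ is supplied by the $\abs{u}$-factor in $\abs{e^u-1}\le\abs{u}e^{\abs{u}}$ (both $\abs{\log(\det\Sigma/\det A)}$ and $\mathbb{E}_W\bigl[\,\abs{W'(A^{-1}-\Sigma^{-1})W}\,\bigr]$ are linear in $r_n(t)$); multiplying it by the determinant factor, the moment-generating-function factor, and the remaining powers of $p$ and $C_\Sigma$, and tracking all constants, gives $\mathbb{E}_W[\abs{\rho(W)-1}]\le w_n(t)=C_\Sigma p^3 2^{p+1}r_n(t)$. Combining Steps~1 and~2 proves the lemma.

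The main obstacle is Step~2: one must propagate an entrywise covariance error through a matrix inverse and a determinant while keeping every constant expressible through $C_\Sigma$, $p$ and $r_n(t)$ alone, and --- the genuinely delicate point --- ensure that the Gaussian quadratic-form moment generating functions that appear are finite, which is exactly why \eqref{tech:condition:further} caps $t$ at $\kappa^* n$ (and why $\kappa^*$ carries the $1/2$ branch). Step~1 is comparatively mechanical, Assumption~\ref{a3} having been written so that Bernstein's inequality applies off the shelf; there the only care is the union-bound bookkeeping, for which the lower bound $t\ge\tfrac{9}{4}\log(2p)$ is tailored.
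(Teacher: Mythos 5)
Your Step 1 is essentially the paper's argument: decompose $\omega=\widehat{\Sigma}-\Sigma$ as $n^{-1}\sum_{i=1}^n[(X_i-\mu)(X_i-\mu)'-\Sigma]-(\bar X-\mu)(\bar X-\mu)'$, apply Bernstein's inequality (for maxima of averages, so the union bound over entries and coordinates is built in) to each piece using the two lines of \eqref{kappa1-min}, and use \eqref{tech:condition:further} to absorb the $\log(2p)$ and $\kappa_1 t/n$ terms into $\sqrt{2t/n}$, yielding $\max_{j,k}\abs{\omega_{jk}}\le r_n(t)=8\sqrt{2t/n}$ with probability at least $1-2e^{-t}$. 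Step 2, however, is where your proposal diverges and where it has a genuine gap. The paper's deterministic step is short and fully explicit: it bounds $\abs{\mathbf{P}(\mathcal{S},\widehat{\Sigma})-\mathbb{P}(W\in\mathcal{S})}$ by the total variation distance $\mathrm{TV}\bigl[N(0,I_{p\times p}),N(0,\Sigma^{-1}\widehat{\Sigma})\bigr]$, invokes a known inequality (Example 2.3 of Dasgupta, 2008) giving $\mathrm{TV}\le p2^{p+1}\norm{\Sigma^{-1}\widehat{\Sigma}-I_{p\times p}}_{\mathrm{F}}$, and then bounds the Frobenius norm entrywise by $C_\Sigma p^2 r_n(t)$ using Assumption \ref{a2}(iii); this produces exactly $w_n(t)=C_\Sigma p^3 2^{p+1} r_n(t)$, with no condition on $t$ and no smallness requirement in this step. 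You instead re-derive a TV-type bound from scratch via the likelihood ratio and $\abs{e^u-1}\le\abs{u}e^{\abs{u}}$. The architecture is sound, but the entire content of the lemma is the explicit constant, and you never actually produce it: the entrywise control of $A^{-1}$ through a perturbation of $\Sigma^{-1}$, the $\log\det$ term, the determinant prefactor, and the expectation $\mathbb{E}_W[\abs{u(W)}e^{\abs{u(W)}}]$ (which requires a H\"older or Cauchy--Schwarz step because $\abs{u}$ and $e^{\abs{u}}$ are dependent, doubling the exponent in the quadratic-form moment generating function) each contribute their own factors, and it is not shown that their product stays at or below $C_\Sigma p^3 2^{p+1}$. ``Tracking all constants'' is asserted, not done, so as written the proposal does not establish the stated bound $w_n(t)$.

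A secondary inaccuracy: you attribute the branch $\kappa^*\le 1/2$ of \eqref{choice:rt:final} to keeping Gaussian moment generating functions finite in Step 2. In the paper that branch (equivalently $t\le n/2$ in \eqref{tech:condition}) belongs entirely to the Bernstein step, where it guarantees $\sqrt{2t/n}\le 1$ so that the squared maximum $\{\max_j\abs{\bar X_j-\mu_j}\}^2$ can be compared with the unsquared Bernstein threshold; the Gaussian-comparison step needs no restriction on $t$ at all. In your own Step 2 the smallness you need is really supplied by your ``if $w_n(t)>1$ the claim is vacuous'' reduction, not by $t\le n/2$, which only caps $r_n(t)$ at $8$ and does not make it small. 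If you want to keep your self-contained route, you must carry out the constant-tracking explicitly (or weaken it by simply citing the Dasgupta-type total variation bound, as the paper does), and relocate the justification for assuming $\widehat{\Sigma}$ close to $\Sigma$ to the vacuousness reduction rather than to \eqref{tech:condition:further}.
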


The condition \eqref{tech:condition:further} is a mild technical condition that can be satisfied easily. The conclusion of Lemma~\ref{lem2} holds only if $\widehat{\Sigma}$ satisfies certain conditions that are stated in the
  proof of the lemma in Online Appendix \ref{app:proof}. These conditions are satisfied with probability at least 
$1-2 e^{-t}$, not with certainty.

Define
\begin{align}
\delta_n^* := \min_t \left[ w_n(t) + 2 e^{-t} \right]  \  \text{ subject to \eqref{tech:condition:further}}.
\end{align}
Now combine Theorem~\ref{lem1} and Lemma~\ref{lem2} to obtain the following theorem.

\begin{theorem}\label{thm1}
Let Assumptions \ref{a1}, \ref{a2}, \ref{a3} and \eqref{tech:condition:further} hold. 
Then,
\[
\abs{   \mathbb{P} (\bar{Z} \in \mathcal{S}) - \mathbf{P} (\mathcal{S}, \widehat{\Sigma}) }
\leq \frac{(42p^{1/4} + 16) p^{3/2} \overline{\mu}_3}{ n^{1/2}} + \delta_n^*.
\] 
\end{theorem}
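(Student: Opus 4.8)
The plan is to read Theorem~\ref{thm1} off Lemmas~\ref{lem1} and~\ref{lem2} by a triangle inequality that inserts the Gaussian reference probability $\mathbb{P}(W \in \mathcal{S})$ as an intermediate term:
\[
\bigl| \mathbb{P}(\bar{Z} \in \mathcal{S}) - \mathbf{P}(\mathcal{S}, \widehat{\Sigma}) \bigr|
\;\le\; \bigl| \mathbb{P}(\bar{Z} \in \mathcal{S}) - \mathbb{P}(W \in \mathcal{S}) \bigr|
 + \bigl| \mathbb{P}(W \in \mathcal{S}) - \mathbf{P}(\mathcal{S}, \widehat{\Sigma}) \bigr| .
\]
Assumptions~\ref{a1}--\ref{a3} cover the hypotheses of Lemma~\ref{lem1}, so the first summand is bounded \emph{deterministically} by $(42 p^{1/4} + 16) p^{3/2} \overline{\mu}_3 / n^{1/2}$. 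For the second summand I would apply Lemma~\ref{lem2}: under Assumptions~\ref{a1}--\ref{a3} and for every $t$ with $\tfrac{9}{4}\log(2p) \le t \le \kappa^{*} n$, this summand is at most $w_n(t)$ outside an exceptional event $E_t$ of probability at most $2 e^{-t}$.

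The remaining work is purely bookkeeping: converting ``$\le w_n(t)$ off an event of probability $\le 2e^{-t}$'' into the stated envelope involving $\delta_n^{*}$. On $E_t^{c}$ the left-hand side is at most $(42 p^{1/4}+16)p^{3/2}\overline{\mu}_3/n^{1/2} + w_n(t)$; on $E_t$ one uses only that both $\mathbb{P}(\bar{Z}\in\mathcal{S})$ and $\mathbf{P}(\mathcal{S},\widehat{\Sigma})$ are probabilities, so the difference is crudely controlled and its contribution is charged against $\mathbb{P}(E_t)\le 2e^{-t}$. Adding the width $w_n(t)$ and the exceptional probability $2e^{-t}$ and minimising $w_n(t)+2e^{-t}$ over the admissible range of $t$ subject to \eqref{tech:condition:further} gives exactly $\delta_n^{*}$, hence the claimed bound $(42 p^{1/4}+16)p^{3/2}\overline{\mu}_3/n^{1/2} + \delta_n^{*}$.

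I do not expect a genuine obstacle here --- modulo Lemmas~\ref{lem1}--\ref{lem2}, Theorem~\ref{thm1} is an elementary combination --- but the step that needs to be written out with care is this last one: Lemma~\ref{lem2} holds only with high probability while Theorem~\ref{thm1} is phrased as a single inequality, so one must be explicit about the event on which it is valid and about how the failure probability $2e^{-t}$ is folded into $\delta_n^{*}$. That packaging is chosen precisely so that the estimate can later be combined with Theorem~\ref{thm2} to produce the finite-sample lower bound on the coverage probability of \eqref{est-ci}, where the $2e^{-t}$ term enters an honest union-bound accounting rather than a pointwise one.
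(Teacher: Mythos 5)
Your proposal is correct and follows essentially the same route as the paper: the paper's proof of Theorem~\ref{thm1} is exactly the triangle inequality through the intermediate Gaussian term $\mathbb{P}(W \in \mathcal{S})$, followed by invoking Lemmas~\ref{lem1} and~\ref{lem2}. Your additional bookkeeping about the exceptional event of probability at most $2e^{-t}$ and its absorption into $\delta_n^{*}$ is consistent with how the paper defines $\delta_n^{*} = \min_t\left[w_n(t) + 2e^{-t}\right]$ (and is in fact spelled out more explicitly by you than in the paper's own one-line proof).
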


Theorem \ref{thm1} provides a finite-sample upper bound on the error made by approximating 
$\mathbb{P} \left[ n^{1/2} (\bar{X} - \mu) \in \mathcal{S} \right]$
by 
$\mathbf{P} (\mathcal{S}, \widehat{\Sigma})$. 
Combining Theorems \ref{thm2} and \ref{thm1} yields

\begin{theorem}\label{thm3}
Let Assumptions \ref{a1}, \ref{a2}, \ref{a3} and \eqref{tech:condition:further} hold.  
Then, 
\begin{align}\label{main-result-final}
&\mathbb{P} \left[ \hat{J}_{-}(\bar{X}) \leq J_{-} \leq f(\psi) \leq J_{+} \leq \hat{J}_{+}(\bar{X}) \right] 
\geq 
\mathbf{P} (\mathcal{S}, \widehat{\Sigma})
- \left\{ \frac{(42p^{1/4} + 16) p^{3/2} \overline{\mu}_3}{ n^{1/2}} + \delta_n^* \right\}.
\end{align}
\end{theorem}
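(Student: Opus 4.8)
The plan is to chain Theorem~\ref{thm2} with Theorem~\ref{thm1}; no new machinery is needed, only transitivity of the two inequalities already established. First I would invoke Theorem~\ref{thm2}, whose hypothesis is precisely the assumed feasibility of $(\psi,\mu)$ under the population constraints, to obtain
\[
\mathbb{P}\left[\hat{J}_{-}(\bar{X}) \leq J_{-} \leq f(\psi) \leq J_{+} \leq \hat{J}_{+}(\bar{X})\right] \geq \mathbb{P}\left[n^{1/2}(\bar{X}-\mu) \in \mathcal{S}\right].
\]
Next I would note the scaling identity $n^{1/2}(\bar{X}-\mu) = n^{-1/2}\sum_{i=1}^n (X_i-\mu) = \bar{Z}$, so that the right-hand side above is exactly $\mathbb{P}(\bar{Z}\in\mathcal{S})$, the quantity controlled by Theorem~\ref{thm1}.

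The second step is to lower-bound $\mathbb{P}(\bar{Z}\in\mathcal{S})$. Under Assumptions~\ref{a1}--\ref{a3} and the technical condition, Theorem~\ref{thm1} gives
\[
\left|\mathbb{P}(\bar{Z}\in\mathcal{S}) - \mathbf{P}(\mathcal{S},\widehat{\Sigma})\right| \leq \frac{(42p^{1/4}+16)p^{3/2}\overline{\mu}_3}{n^{1/2}} + \delta_n^*,
\]
and retaining only the lower side of the absolute value yields
\[
\mathbb{P}(\bar{Z}\in\mathcal{S}) \geq \mathbf{P}(\mathcal{S},\widehat{\Sigma}) - \left\{\frac{(42p^{1/4}+16)p^{3/2}\overline{\mu}_3}{n^{1/2}} + \delta_n^*\right\}.
\]
Combining this with the display from the first step gives \eqref{main-result-final}. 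Before doing so I would check that the hypotheses of Theorem~\ref{thm3}, namely feasibility together with Assumptions~\ref{a1}--\ref{a3} and the technical condition, are exactly the union of the hypotheses of Theorems~\ref{thm2} and~\ref{thm1}, which they are.

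I do not anticipate a genuine obstacle, since the argument is pure transitivity over two already-proved statements. The only points requiring a moment's care are: (a) confirming the identity $n^{1/2}(\bar{X}-\mu)=\bar{Z}$ so that Theorems~\ref{thm2} and~\ref{thm1} refer to the same event; and (b) observing that the asserted bound is automatically valid whenever its right-hand side is negative, because the left-hand side is a probability and hence nonnegative, so the statement remains meaningful even in the regime where the normal-approximation error terms exceed $\mathbf{P}(\mathcal{S},\widehat{\Sigma})$. If one wished, one could state the conclusion more sharply as a maximum of the displayed lower bound with $0$, but since that is implied it is cleaner to leave the inequality as written.
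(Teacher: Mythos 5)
Your proposal is correct and matches the paper's argument: the paper's proof of this theorem is simply the observation that combining Theorem~\ref{thm2} with Theorem~\ref{thm1} yields \eqref{main-result-final}, which is exactly the chaining you carry out (with the identity $n^{1/2}(\bar{X}-\mu)=\bar{Z}$ made explicit). No difference in approach, only in the level of detail.
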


Theorem \ref{thm3} provides a finite-sample lower bound on 
$\mathbb{P} \left[ \hat{J}_{-}(\bar{X}) \leq J_{-} \leq f(\psi) \leq J_{+} \leq \hat{J}_{+}(\bar{X}) \right]$
that takes account of random sampling error in $\hat{\Sigma}$. It is not
difficult to choose $\mathcal{S}$ so that the right-hand side of 
\eqref{main-result-final} is $1-\alpha$ for any $0 \leq \alpha \leq 1$ if $n$ is large enough and $p$ is small enough to make the term in square brackets on right-hand side of the
inequality less than $\alpha$. However, the presence of $\delta_n^*$ greatly decreases the right-hand side of \eqref{main-result-final}
relative to what it is when $\Sigma$ is known, thereby increasing the size of the confidence region $\mathcal{S}$ for any given value of $\alpha$. In addition, the right-hand side of \eqref{main-result-final} depends on several population parameters that are difficult to estimate. Therefore, \eqref{main-result-final} is of limited use for applications. The bounds for Methods 2 and 3 with an unknown variance proxy also depend on unknown population parameters, but these are easier to estimate, as is discussed in Section~\ref{sec:methods2and3}. The dependence of finite-sample bounds on unknown population parameters is unavoidable except in special cases.
For example, if $\Upsilon = I_p$ in Theorem~\ref{lem-SG-1} and each element of $X-\mu$ is contained in $[-1, 1]$, 
then $\sigma_j^2 = 1$ for all $j$, and the first inequality in Theorem~\ref{lem-SG-1} becomes the Hoeffding inequality.

We now consider Method 2, which is the sub-Gaussian case of Assumption~\ref{a4}. In some special cases, the sub-Gaussian variance proxies, $\sigma_j^2$ are known.
For example, if $\Upsilon = I_p$ in Theorem~\ref{lem-SG-1}
and each element of $X - \mu$ is contained in $[-1,1]$, then $\sigma_j^2 = 1$ for all $j$. 
Here, we derive bounds for the case in which 
$\sigma^2 = \max_{1 \leq j \leq p} \sigma_j^2$ is unknown and must be estimated.

For all $i=1,\ldots,n$ and $j=1,\ldots,p$,
let
\begin{align*}
\tilde{X}_{ij} := [\Upsilon^{-1/2} X_{i} ]_j,
\end{align*}
and the sample variance of $\tilde{X}_{ij}$
\begin{align*}
\tilde{\Sigma}_{jj} := n^{-1} \sum_{i=1}^n  \tilde{X}_{ij}^2 - \left( n^{-1} \sum_{i=1}^n \tilde{X}_{ij} \right)^2.
\end{align*}
The following lemma establishes a finite-sample probability bound on the absolute difference between the sample and population variances of $\tilde{X}_{ij}$.
Recall that $\tilde{Z}_{ij} = [ \Upsilon^{-1/2} (X_i - \mu) ]_j$  defined in Assumption~\ref{a4}.

\begin{lemma}\label{lem:sample-var}
Let Assumptions \ref{a1}(i) and \ref{a4} hold. Then, for any $0 < t \leq (16 \sigma_j^2)^2 \, n$, 
\begin{align*}
& \mathbb{P} \left[ 
\left| \tilde{\Sigma}_{jj} - \mathbb{E} [ \tilde{Z}_{ij}^2 ] \right| 
> 
3 \left( \frac{ t}{n} \right)  + 
\left( 1+ 2 \left| n^{-1} \sum_{i=1}^n \tilde{X}_{ij} \right| \right) \left( \frac{ t}{n} \right)^{1/2} 
 \right] \leq
 e^{- t / (512 \sigma_j^4) }  + 2 e^{-t / (2 \sigma_j^2)}.
\end{align*}
\end{lemma}

The next lemma establishes a link between the population variance of
$\tilde{Z}_{ij}$
to the variance proxy $\sigma_j^2$.

\begin{lemma}\label{lem:sample-var-proxy-equ}
Let Assumptions \ref{a1}(i) and \ref{a4} hold. 
Then, the population variance $\mathbb{E} [ \tilde{Z}_{ij}^2 ]$ is a lower bound on the variance proxy $\sigma_j^2$.
\end{lemma}

Now define
\begin{align}\label{est-SG-parameter}
\hat{\sigma}^2 (t)
:=  \max_{j=1,\ldots, p} 
\left\{ 
\tilde{\Sigma}_{jj}
+ 3 \left( \frac{ t}{n} \right)  + 
\left( 1+ 2 \left| n^{-1} \sum_{i=1}^n \tilde{X}_{ij} \right| \right) \left( \frac{ t}{n} \right)^{1/2} 
\right\},
\end{align} 
which we use to estimate  $\sigma^2 = \max_{1 \leq j \leq p} \sigma_j^2$.
  The following theorem holds for the sub-Gaussian
case of Assumption~\ref{a4} (Method 2).

 \begin{theorem}\label{thm3-SG}
Let Assumptions \ref{a1}(i),(iii),(iv),(v), and \ref{a4} hold. 
Let the restriction $n^{1/2} (\bar{X} - \mu) \in \mathcal{S}$ be given by
\begin{align}\label{main-result-final-SG-CI}
n (\bar{X} - \mu)' \Upsilon^{-1} (\bar{X} - \mu)
\leq 2 \hat{\sigma}^2 (t) \cdot p \cdot  \log (2p/\alpha),
\end{align}
where $\Upsilon$ is non-random and 
and $\hat{\sigma}^2 (t)$ is 
given in \eqref{est-SG-parameter}.
 Then, 
for any $0 < t \leq (16 \sigma_j^2)^2 \, n$, 
\begin{align}\label{main-result-final-SG}
\mathbb{P} \left[ \hat{J}_{-}(\bar{X}) \leq J_{-} \leq f(\psi) \leq J_{+} \leq \hat{J}_{+}(\bar{X}) \right] 
&
\geq 1 - \alpha - 
\sum_{j=1}^p \left( e^{- t / (512 \sigma_j^4) }  + 2 e^{-t / (2 \sigma_j^2)} \right).
\end{align}
\end{theorem}

A smaller choice of $t$ in \eqref{est-SG-parameter} makes the confidence set in \eqref{main-result-final-SG-CI} tighter but results in a
lower probability bound \eqref{main-result-final-SG}. The right-hand side of \eqref{main-result-final-SG} can be close to zero and depends on the unknown parameters $\sigma_j^2$. Therefore, the bound \eqref{main-result-final-SG}, like the bound \eqref{main-result-final}, is of limited use
for applications. As is noted in the discussion of \eqref{main-result-final}, dependence of finite-sample bounds on unknown population quantities is unavoidable except in special cases. Section \ref{sec:methods2and3} describes a practical approach to dealing with this problem that can be used in applications.

A result similar to Theorem \ref{thm3-SG} can be obtained for Method 3. We do not undertake this here, however, because the resulting bound analogous to \eqref{main-result-final-SG} is loose, depends on unknown population parameters, and is of limited use in applications. Instead, in Section \ref{sec:methods2and3}, we describe a way of dealing with unknown population parameters in Methods 2 and 3 that can be used in applications.

\subsection{Dealing with Unknown Population Parameters in Methods 2 and 3 in Applications}\label{sec:methods2and3}

Except in special cases, it is not possible to obtain finite-sample inequalities for Methods 2 and 3 that do not depend on unknown population parameters. It is possible, however, to estimate lower bounds
on these parameters consistently. It follows from
Lemma~\ref{lem:sample-var-proxy-equ}
that 
$\tilde{\Sigma}_{jj}$ and $\max_{1 \leq j \leq p} \tilde{\Sigma}_{jj}$, respectively, are consistent estimates of lower bounds on the variance proxies $\sigma_j^2$ and $\sigma^2$ of Method 2. 
The differences between the lower bounds and the variance proxies are often small.
Arguments like those used to prove Lemma~\ref{lem:sample-var-proxy-equ} show that the
largest eigenvalue of the covariance matrix of 
$\tilde{Z}_{ij} (j =1,\ldots,p)$ is a lower bound on the variance proxy of Method 3.
This can be estimated consistently by the largest eigenvalue of the sample covariance matrix. 

Standard methods can be used to obtain asymptotic confidence intervals for the variance proxies of Method 2. These methods do not provide information about differences between true and nominal coverage probabilities in finite samples, but they provide practical indications of the magnitudes of the Method 2 bounds that can be implemented in applications. 

Obtaining a useful asymptotic confidence interval for the largest eigenvalue of the covariance matrix of $\tilde{Z}_{ij}$ is difficult. A wide interval whose true coverage probability is likely to
be much greater than the nominal probability can be obtained from the Frobenius norm of the difference between the estimated and true covariance matrices.

Table~\ref{MC:tab} in Section~\ref{sec:MC:new} compares values of $\hat{J}_{+}$ obtained using estimates and population values of the variance proxies of Methods 2 and 3.

\section{Computational Algorithms}\label{sec:algorithms}

Recall that our general framework is to obtain the bound 
$[\min_{\psi, m} f(\psi), \max_{\psi, m} f(\psi)]$
subject to 
$g_1(\psi, m) \leq 0,  g_2(\psi, m) = 0, \psi \in \Psi,$
and
$n^{1/2} (\bar{X} - m) \in \mathcal{S}$.
%
%
 In many examples, $f(\psi)$ is linear in $\psi$. For example, $\psi$ is the vector of all the parameters in an econometric model and $f(\psi)$ is just one element of $\psi$ or a linear combination of elements of $\psi$. 


The restrictions $g_1(\psi, \mu) \leq 0$ include shape restrictions among the elements of $\psi$. Equality restrictions are imposed via $g_2(\psi, \mu) = 0$. The easiest case is that $g_j(\psi, \mu)$ is linear in $(\psi, \mu)$ for each $j=1,2$. In some of examples we consider, $g_j(\psi, \mu)$ is linear in $\psi$, holding $\mu$ fixed, and 
linear in $\mu$, keeping $\psi$ fixed, but not linear  in $(\psi, \mu)$ jointly. This corresponds to the case of bilinear constraints. For example, $g_j(\psi, \mu)$ may depend on the product between one of elements of $\psi$ and one of elements of $\mu$.  
In practice, $\Psi$ can always be chosen large enough that the constraint $\psi \in \Psi$ is not binding additionally and can be ignored.
 For example, suppose that $\psi$ is a probability and the constraints in $g_1(\psi, \mu) \leq 0$ and $g_2(\psi, \mu) = 0$ impose a restriction on $\psi$ such as $[a, b]$
for $0 \leq a < b \leq 1$. Then, it is not necessary to impose $\psi \in \Psi = [0,1]$ additionally. 


Recall that the leading cases of $\mathcal{S}$ include an ellipsoid and a box. 
For brevity, we focus on the scenario that normal distributions are used in obtaining $\mathcal{S}$.  
When  $\mathcal{S}$ is a box, the critical value  $\kappa_b(1-\alpha)$ can be easily simulated  from the $N(0, \widehat{\Sigma})$ and the restriction
$n^{1/2}(\bar{X} - \mu) \in \mathcal{S}$
can be written as linear constraints. 
When $\mathcal{S}$ is an ellipsoid,  the critical value $\kappa_e(1-\alpha)$ can be obtained from the $\chi^2( d_\mu )$ distribution, where 
$d_\mu$ is the dimension of $\mu$.
Then, the restriction
$n^{1/2}(\bar{X} - \mu) \in \mathcal{S}$
can be written as
\begin{align*}
\mu' \widehat{\Sigma}^{-1}\mu 
-2 \mu' \widehat{\Sigma}^{-1}\bar{X}
\leq n^{-1} \kappa(1-\alpha) - \bar{X}' \widehat{\Sigma}^{-1}\bar{X}. 
\end{align*}
This is a convex quadratic constraint in $\mu$.

When some of the constraints $g_1(\psi, \mu) \leq 0$ and $g_2(\psi, \mu) = 0$ are bilinear, 
the resulting feasible region may not be convex. 
To deal with the bilinear constraints, we solve optimization problems using mixed integer programming (MIP) with \texttt{Gurobi} in R.
By virtue of the developments in MIP solvers and fast computing environments, 
MIP has become increasingly used in recent applications. 
For example, 
\cite{bertsimas2016} adopted an MIO approach for obtaining  $\ell_0$-constrained estimators in high-dimensional regression models
and   
\cite{Reguant:2016}  used mixed integer linear programming for computing counterfactual outcomes in game theoretic models.

\section{An Empirical Application}\label{sec:Exmp:new}

\citet{AE1998} use data from the 1980 and 1990 U.S. census to estimate a model of the relation between the number of weeks per year a woman works and the number of children she has. A simplified but nonparametric version of their model is 
\begin{align}\label{FH-iv-AE}
Y = \phi(D) + U; \; \mathbb{E}[U | Z ] = 0,
\end{align}
where $Y$ is the number of weeks a woman works in a year; $\phi$ is an unknown function; and $D = 0,1$, or $2$ according to whether a woman has 2, 3, or 4 or more children.
$D$ is endogenous. $Z$ is a binary instrument for $D$ equal to 1 if the first two children are of the same sex and 0 otherwise. 
We obtain bounds on the partially identified parameter $\phi(0)-\phi(1)$, which measures the change in
the number of weeks a woman works when the number of children she has increases from two to three. 
We use data consisting of 394,840 observations 
$(Y,D,Z)$ from the 1980 U.S. census \citep{IPUMS}.

We assume that $\phi$ is monotone non-increasing and focus on the parameter
$f(\psi) = \phi(0)-\phi(1)$, where
 $\psi = [\phi(0), \phi(1), \phi(2)]'$. 
The population mean vector  is 
$
\mu = [\textrm{vec} (\Pi),  \textrm{vec} (\nu)],
$
where
\begin{align*}
\Pi =
\begin{pmatrix}
p_{DZ}(0,0) & p_{DZ}(1,0) &  p_{DZ}(2,0) \\
p_{DZ}(0,1) & p_{DZ}(1,1) &  p_{DZ}(2,1)
\end{pmatrix},
\ \ \ \
\nu = 
\begin{pmatrix}
\mathbb{E}[ Y I(Z = 0) ] \\
\mathbb{E}[ Y I(Z = 1) ] 
\end{pmatrix},
\end{align*}
$p_{DZ} (d, z) := \text{Pr}( D = d, Z = z)$,
and $I(\cdot)$ is the indicator function.
The dependent variable $Y$ is contained in the interval [0,52]. In the analysis described below, we divide $Y$ by 52, so it is
contained in the interval $[0,1]$, and denote the empirical analog of $\mu$ by $\bar{X}$.
The inequality constraints are
\begin{align}\label{g1-ineq-fh}
\begin{split}
g_1(\psi, \mu)  = 
\begin{pmatrix}
- 1 & 1 & 0 \\
0 & - 1 & 1 
\end{pmatrix}
\begin{pmatrix}
\phi(0) \\
 \phi(1) \\
 \phi(2)
\end{pmatrix}
\leq 0.
\end{split}
\end{align}
The equality constraints are
\begin{align}\label{g1-eq-fh}
\begin{split}
g_2(\psi, \mu) = 
\begin{pmatrix}
\Pi \psi - \mu \\
\sum_{d=0}^2 \sum_{z=0}^1 p_{DZ}(d,z) - 1 
\end{pmatrix}
= 0.
\end{split}
\end{align}
The instrumental variable constraints in 
\eqref{g1-eq-fh} are bilinear in the sense that $\Pi \psi$ contains 6 bilinear terms.

 We estimate the following 8 identification or 95\% confidence intervals for $f(\psi)$.

\begin{enumerate}
\item (\texttt{Sample}) The identification interval given by \eqref{obj-pop}-\eqref{g-constraint}
under the assumption that the sample analogs of the components of $\mu$ equal the population values. 
The resulting estimated identification interval is a consistent point estimate of the population identification interval but
does not take account of random sampling errors in the estimate of $\mu$.

\item (\texttt{Box}) The 95\% confidence interval given by \eqref{obj0}--\eqref{g-constraint-est}  with $\mathcal{S}$ a box, the $D_j$'s equal to the
diagonal elements of the sample covariance matrix $\hat{\Sigma}$, and
$\kappa_b (1-\alpha) = 2.66$, which is obtained via the method given in Section~\ref{sec:algorithms}.
This estimate treats $\hat{\Sigma}$ as if it were the population covariance matrix.

\item (\texttt{Chi-Square})
The 95\% confidence interval given by \eqref{obj0}--\eqref{g-constraint-est} 
 with $\mathcal{S}$ an ellipsoid, $\Upsilon = \hat{\Sigma}$, and
$\kappa_e (1-\alpha) = 14.08$, which is the 0.95 quantile of the chi-square distribution with 7 degrees of
freedom (the number of distinct components of $\mu$ taking account of the constraint that
$\sum_{d=0}^2 \sum_{z=0}^1 p_{DZ} (d,z) = 1$). This estimate, like the previous one, treats $\hat{\Sigma}$ as if it were the population covariance matrix.

\item (\texttt{Method 2 ($\Upsilon = \hat{\Sigma}$, $\sigma^2 = 1$)})
The 95\% confidence interval obtained from Method 2 with $\mathcal{S}$ an ellipsoid, $\Upsilon = \hat{\Sigma}$, $\sigma^2 = 1$, and
$\kappa_{SG,1}(0.95) = 78.89$.
This estimate also treats $\Upsilon = \widehat{\Sigma}$ as if it were non-random.
The choices of $\Upsilon$ and $\sigma^2$ are motivated by 
consistency of $\hat{\Sigma}$ for $\Sigma$.

\item (\texttt{Method 2 ($\Upsilon = I_7$, $\sigma^2 = 1$)})
The 95\% confidence interval obtained from Method 2 with $\mathcal{S}$ an ellipsoid, $\Upsilon = I_7$, $\sigma^2 = 1$, and
$\kappa_{SG,1}(0.95) = 78.89$.
This estimate does not use $\widehat{\Sigma}$.
This choices of $\Upsilon$ and $\sigma^2$ ensure  finite sample validity because each element of $X-\mu$ is contained in the interval $[-1,1]$.

\item (\texttt{Method 2 ($\Upsilon = I_7$, $\sigma^2 = \hat{\sigma}^2 (\log n)$)})
The 95\% confidence interval obtained from Method 2 with $\mathcal{S}$ an ellipsoid, $\Upsilon = I_7$, $\sigma^2 = \hat{\sigma}^2 (\log n) = 0.22$ (that is, the variance proxy is estimated by 
\eqref{est-SG-parameter} with $t = \log n$), and $\kappa_{SG,1}(0.95) = 78.89$.
This estimate also does not use $\widehat{\Sigma}$ but estimates the variance proxy instead of using the bounded support of $X-\mu$.

\item (\texttt{Method 3})
The 95\% confidence interval obtained from Method 3 with $\mathcal{S}$ an ellipsoid, $\Upsilon = \hat{\Sigma}$, $\sigma^2 = 1$, and
$\kappa_{SG,2}(0.95) = 22.15$.
This estimate treats $\Upsilon = \widehat{\Sigma}$ as if it were not random.
The choices of $\Upsilon$ and $\sigma^2$ are motivated by consistency of $\hat{\Sigma}$ for $\Sigma$.

\item (\texttt{Minsker}) The 95\% confidence interval based on 
\cite{Minsker2015}. 
Specifically, it is obtained by replacing $n^{1/2} (\bar{X} - \mu) \in \mathcal{S}$
with $(\hat{\mu}_\ast - \mu)'(\hat{\mu}_\ast - \mu) \leq r_{n,\ast}^2$, where 
$\hat{\mu}_\ast$ is a median-of-means estimator of $\mu$ based on coordinate-wise medians
and $r_{n,\ast}^2$ is the critical value that ensures finite sample coverage. 
See Online Appendix~\ref{sec:alt-theory} for details.

\end{enumerate}

We do not consider a confidence interval based on \eqref{CI-chi2-finite}
because $\textrm{B}(n,p,\overline{\mu}_3) \approx 2.49 \overline{\mu}_3$ (here,  $n = 394,840$ and $p = 7$)  
and as a result, it is extremely unlikely that $\textrm{B}(n,p,\overline{\mu}_3) < 0.05$,
although $\overline{\mu}_3$ is unknown.
Computation was carried out on a MacBookPro laptop with an Apple M1 chip and 16 GB of memory.

\begin{table}[htbp]
\small
\caption{\citet{AE1998} Revisited}\label{ae-example-SG}
\begin{center}
\begin{tabular}{lcc}
\hline\hline
 $\mathcal{S}$ & Bounds & Computing \\ 
               &  &  Time (sec) \\ \hline
   \texttt{Sample} &  $[1.44, 6.36]$ & 0.08  \\
    \texttt{Box} & $[0, 17.06]$ & 11 \\
    \texttt{Chi-Square} & $[0, 10.85]$ & 12  \\  
    \texttt{Method 2 ($\Upsilon = \hat{\Sigma}$, $\sigma^2 = 1$)} & $[0, 17.19]$ & 15  \\
    \texttt{Method 2 ($\Upsilon = I_7$, $\sigma^2 = 1$)} & $[0, 36.39]$ & 5  \\
    \texttt{Method 2 ($\Upsilon = I_7$, $\sigma^2 = \hat{\sigma}^2 (\log n)$)} & $[0, 26.73]$ & 12  \\      
    \texttt{Method 3} & $[0, 12.01]$ & 13  \\   
    \texttt{Minsker} & $[0, 37.12]$ & 6  \\     
    \hline
\end{tabular}
\end{center}
\end{table}%

The results, including computing times, are shown in Table~\ref{ae-example-SG}.
The intervals are in weeks,
not weeks divided by 52.
The estimates obtained by methods 1-8 above are labelled in typewriter font:
\texttt{Sample; 
\hspace*{-2ex} Box; 
\hspace*{-2ex} Chi-Square;}
three types of  \texttt{Method 2};
\texttt{Method 3}; and \texttt{Minsker}; respectively.
 As expected, the \texttt{Sample}
interval is narrower than the \texttt{Box} and \texttt{Chi-Square} intervals, and the \texttt{Chi-Square} interval is narrower than the \texttt{Box} interval. 
The \texttt{Method 2 ($\Upsilon = \hat{\Sigma}$, $\sigma^2 = 1$)} and \texttt{Method 3} intervals are wider than the \texttt{Chi-Square} interval.
 The \texttt{Method 2 ($\Upsilon = \hat{\Sigma}$, $\sigma^2 = 1$)} interval is wider
than the \texttt{Method 3} interval, which is based on a stronger assumption about the distribution of the observed random variables. 
The foregoing methods are motivated by consistency of the estimates of the population parameters they depend on.
They do not take account of random sampling error in the estimates.
The \texttt{Method 2 ($\Upsilon = I_7$, $\sigma^2 = 1$)} and \texttt{Minsker} intervals ensure
 finite-sample coverage probabilities of 0.95 but are wider than the other intervals.
The \texttt{Method 2 ($\Upsilon = I_7$, $\sigma^2 = \hat{\sigma}^2 (\log n)$)} interval 
provides a tighter upper bound of 26.73 
 by estimating the variance proxy instead of relying on the bounded support of $X-\mu$.
Only the \texttt{Sample} estimate yields an informative
lower bound on $\phi(0) - \phi(1)$. Excluding the \texttt{Sample} bounds, which do not take account of random
sampling error in the estimate of $\mu$, the upper bounds indicate that an increase in the number of
children a woman has from 2 to 3 reduces her annual employment
at most
 by approximately 11-37 weeks, depending on the estimation method. All of the computing times are small.

\section{Monte Carlo Experiments}\label{sec:MC:new}

This section presents the results of Monte Carlo experiments that investigate the widths and coverage probabilities of nominal 95\% confidence intervals for $f (\psi)$ that are obtained by
using the methods described in Section \ref{sec:toy:exmp}. For reasons explained in Section \ref{sec:MC:design}, we concentrate on the upper confidence limit $\hat{J}_{+}$ and do not investigate $\hat{J}_{-}$. The experiments are designed to mimic the empirical application of Section \ref{sec:Exmp:new}.

\subsection{Design}\label{sec:MC:design}

Data were generated by simulation from model \eqref{FH-iv-AE} as follows. Using the notation of Section \ref{sec:Exmp:new}, set
\begin{align*}
\psi = [ \phi(0), \phi(1), \phi(2) ] =  \frac{1}{52} (23, 18, 16)
\ \ \text{ and } \ \
\Pi =
\begin{pmatrix}
0.31 & 0.13 &  0.05 \\
0.29 & 0.16 &  0.06
\end{pmatrix}.
\end{align*}
The values of $\phi$  are similar to those used in
 \citet{FH:2015}, who considered larger support of $D$,
 and 
the values of $\Pi$ are the same as the sample probabilities of $\Pi$ in the empirical example of Section \ref{sec:Exmp:new}.
Simulated values of $(D,Z)$ were drawn from $\Pi$. 
The outcome variable $Y$ was generated from 
\begin{align}\label{FH-iv-dgp}
Y = \mathbb{E}[ \phi(D) | Z ] + \frac{V}{52}; \ \ V \sim N(0,1).
\end{align}
The parameter of interest is 
$f(\psi) = 52 [ \phi(0) - \phi(1)] = 5$, which is not point-identified.
We impose that each element of $(\psi, \mu)$ is non-negative
and impose the monotonicity constraint \eqref{g1-ineq-fh}. 
The population value of  $[J_{-}, J_{+}]$ is  $[0.381, 5.438]$.
Thus, the data are not very informative about $J_{-}$.
Consequently, we focus on $\hat{J}_{+}$, the estimate of $J_{+}$, in the experiments 
and do not investigate $J_{-}$. 

We carried out experiments with $\mathcal{S}$ an ellipsoid and $\mathcal{S}$ a box. We report the averages and empirical coverage probabilities of $\hat{J}_{+}$ obtained by using 8 methods described in Section \ref{sec:Exmp:new}. 
The nominal coverage probability was 95\% and 
there were 500 Monte Carlo replications per experiment.
In experiments in which $\mathcal{S}$ is a box, 
$\kappa_b(1-\alpha)$ was computed using $10^6$
 random draws from $N(0,\widehat{\Sigma})$.


\begin{figure}[htbp]
\caption{Distributions of $\hat{J}_{+}$}\label{fig-mc}
\begin{center}
\includegraphics[scale=0.35, angle=270]{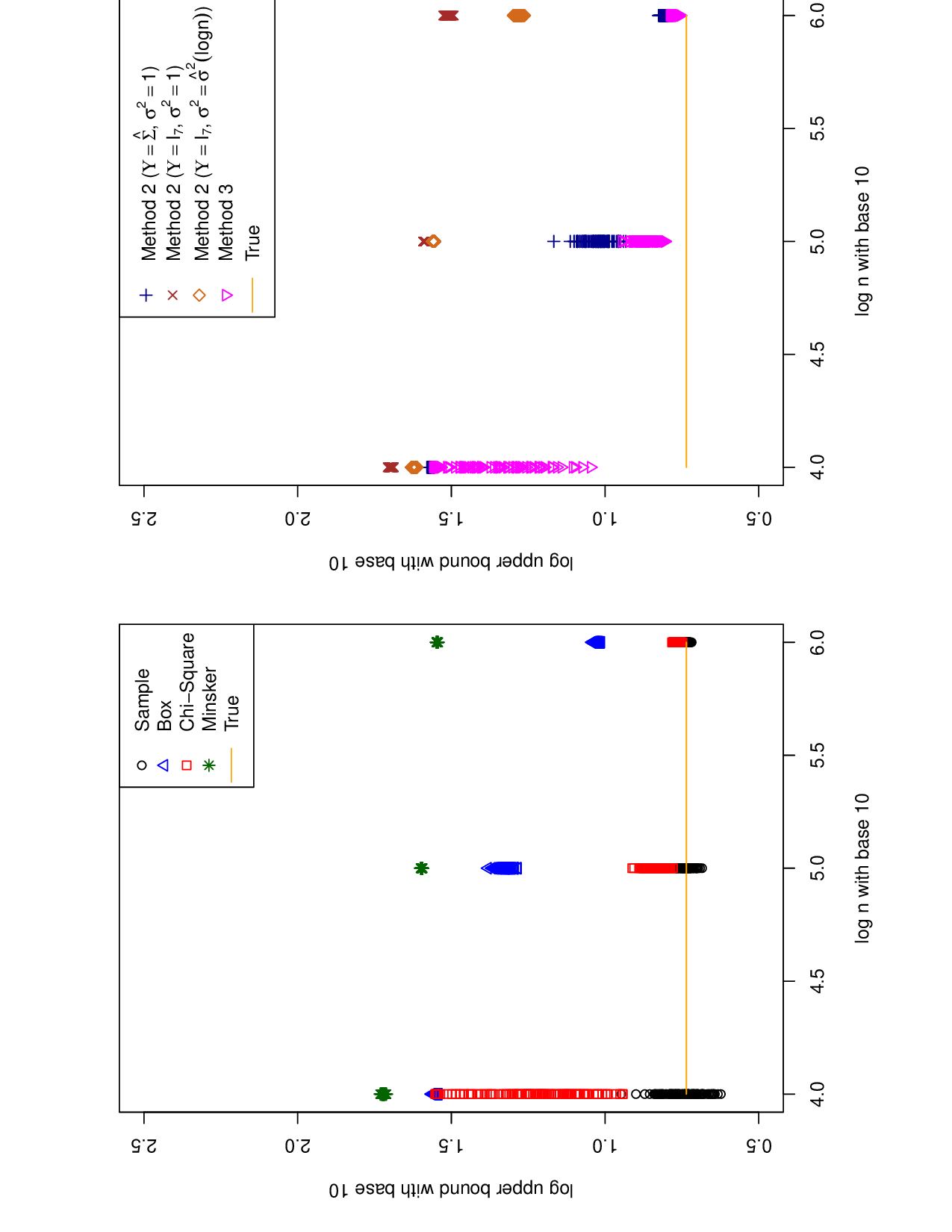}
\end{center}
\end{figure}

\subsection{Results}\label{sec:MC:results}

Figure~\ref{fig-mc} shows only 100 (out of 500) realized values for each different estimated bound by sample size $n
\in \{ 10^4,  10^5,  10^6 \}$.
Each symbol corresponds to one Monte Carlo realization,
and the population value $J_{+}$ is shown as a horizontal line. 
In the figure, both $x$ and $y$ axes are shown in the log scale with base 10.
Table~\ref{MC:tab} summarizes the results of the Monte Carlo experiments. The quantities in parentheses in the table are the values of $\hat{J}_{+}$ for the \texttt{Method 2 ($\Upsilon = \hat{\Sigma}$, $\sigma^2 = 1$)} and \texttt{Method 3} methods obtained with the population values of the sub-Gaussian variance proxies, not estimates.
{In the examples treated in these experiments, inference based on the true variance proxies and on consistent estimates of the proxies would be almost identical.}  

The \texttt{Sample} bounds converge to the true upper bound as $n$ increases; however, they are not suitable for inference since sampling errors are ignored.
All bounds get smaller as $n$ gets large, but 
the  \texttt{Chi-Square}, \texttt{Method 2 ($\Upsilon = \hat{\Sigma}$, $\sigma^2 = 1$)}, and \texttt{Method 3} methods
provide much tighter bounds than the other methods. 
The values of $\hat{J}_{+}$ for the \texttt{Method 2 ($\Upsilon = \hat{\Sigma}$, $\sigma^2 = 1$)} and \texttt{Method 3} methods obtained with estimated and true values of the sub-Gaussian variance proxies are nearly equal.
In every case except the \texttt{Sample} bounds, the estimated bounds are larger than $J_{+}$, resulting in the 100\% empirical coverage probabilities.
Overall, the simulation results are consistent with 
those of the empirical application in Section~\ref{sec:Exmp:new}.


\begin{table}[htbp]
\small
\caption{Results of Monte Carlo Experiments}\label{MC:tab}
\begin{center}
\begin{tabular}{lrrr}
 \hline\hline
 & $n = 10^4$ &  $10^5$ &  $10^6$ \\ 
  \hline
\multicolumn{4}{l}{Averages of $\hat{J}_{+}$} \\  
\texttt{Sample} & 5.60 & 5.43 & 5.45 \\ 
  \texttt{Box} & 35.61 & 21.12 & 10.70 \\ 
  \texttt{Chi-Square} & 17.92 & 6.84 & 5.84 \\ 
  \texttt{Method 2 ($\Upsilon = \hat{\Sigma}$, $\sigma^2 = 1$)} & 36.88 & 10.51 & 6.44 \\ 
\texttt{Method 2 ($\Upsilon = \Sigma$, $\sigma^2 = 1$)} & (36.84) &  &  \\   
  \texttt{Method 2 ($\Upsilon = I_7$, $\sigma^2 = 1$)} & 49.75 & 38.73 & 32.37 \\ 
  \texttt{Method 2 ($\Upsilon = I_7$, $\sigma^2 = \hat{\sigma}^2 (\log n)$)} & 41.78 & 36.09 & 19.07 \\ 
  \texttt{Method 3} & 25.98 & 7.30 & 5.94 \\ 
\texttt{Method 3} with true $\Sigma$ & (25.43) &  &  \\ 
  \texttt{Minsker} & 52.62 & 39.47 & 35.21 \\ 
   \hline
\multicolumn{4}{l}{Proportions that $\hat{J}_{+} \geq J_{+} = 5.438$} \\  
\texttt{Sample} & 0.49 & 0.48 & 0.51 \\ 
  All other methods & 1.00 & 1.00 & 1.00  \\ 
 \hline   
\end{tabular}
\end{center}
\end{table}%

Table~\ref{MC:tab} shows that the bounds on $J_{+}$ vary greatly, depending on the method used
and assumptions made about population parameters. In an application, a researcher might compute the bounds using all the methods and assumptions that are relevant to the application. The researcher can then decide which bounds to use or report, depending on his/her beliefs about the accuracy of the asymptotic approximations s/he is making and how much risk of inaccuracy s/he is willing to accept.

To check sensitivity to sub-Gaussian assumptions, we carried out 
additional Monte Carlo experiments by replacing $V$ in 
\eqref{FH-iv-dgp}
with $V \sim {(\chi^2(5) - 5)}/{\sqrt{10}}$. In other words, we
use a standardized chi-square random variable, which is not sub-Gaussian but sub-exponential. 
The results of the additional Monte Carlo experiments are similar to those reported here with the standard normal $V$.
See Online Appendix~\ref{sec:Exmp:appendix} for details.

\section{Conclusions}\label{sec:concl}

This paper has described a method for carrying out inference on partially identified parameters that are solutions to a class of optimization problems. The parameters arise, for example, in applications in which grouped data are used for estimation of a model's structural parameters. Inference consists of obtaining confidence intervals for the partially identified parameters. The paper has presented three methods for obtaining finite-sample lower bounds on the coverage probabilities of the confidence intervals. The methods correspond to three sets of assumptions of increasing strength about the underlying random variable. With the moderate sample sizes found in most economics applications, the bounds become tighter as the assumptions strengthen. The paper has also described a computational algorithm for implementing the methods. The results of Monte Carlo experiments and an empirical example illustrate the methods' usefulness.
The paper has focused on the case in which a vector of first moments is the only unknown population parameter in the population version of the optimization problem. It might be useful to extend our formulation to the case in which the population optimization problem contains other population parameters. It is an open question how to obtain finite-sample lower bounds on coverage probabilities in such a case.

\appendix

\renewcommand{\thepage}{A-\arabic{page}} \setcounter{page}{1}


\section{Proofs of Theorems}\label{app:proof}

\begin{proof}[Proof of Theorem \ref{thm2}]
Suppose $n^{1/2} (\bar{X} - \mu)$ is in   $\mathcal{S}$.
 Any feasible solution of \eqref{obj-pop}--\eqref{g-constraint} 
 is also a feasible solution of 
 \eqref{obj0}--\eqref{g-constraint-est}.  
 Therefore, the feasible region of 
 \eqref{obj0}--\eqref{g-constraint-est}  
 contains the feasible region of 
 \eqref{obj-pop}--\eqref{g-constraint}.  
 Consequently,
 \[
\hat{J}_{-}(\bar{X}) \leq J_{-} \leq J_{+} \leq \hat{J}_{+}(\bar{X}),
\]
which in turn proves \eqref{main-result}.
\end{proof}

\begin{proof}[Proof of Theorem \ref{lem1}]
Define the random $p$-vector $\bar{V} := \Sigma^{-1/2} \bar{Z}$. Then
$\mathbb{E}( \bar{V} ) = 0$ and $\text{cov}(\bar{V}) = I_{p \times p}$. Define the set
\begin{align}\label{S-set-def}
\mathcal{S}_\Sigma := \left\{ \xi : \xi = \Sigma^{-1/2} \zeta; \zeta \in \mathcal{S} \right\}.
\end{align}
Then $\mathcal{S}_\Sigma$ is a convex set. Define the random
vectors $U_i \sim N(0, I_{p \times p})$ and
$\bar{U} := n^{1/2} \sum_{i=1}^n U_i$.
It follows from Assumption \ref{a2}(ii) and the generalized Minkowski
inequality that
\[
\mathbb{E} \left[ \left( \sum_{j=1}^p [ \Sigma^{-1/2} (X_i - \mu) ]_j^2 \right)^{3/2} \right] \leq p^{3/2} \overline{\mu}_3.
\]
In addition, it follows from Theorem 1.1 of \cite{Raic2019} that
\[
\abs{ \mathbb{P} (\bar{V} \in \mathcal{S}_\Sigma) - \mathbb{P} (\bar{U} \in \mathcal{S}_\Sigma) }
\leq \frac{(42p^{1/4} + 16) p^{3/2} \overline{\mu}_3}{ n^{1/2}},
\]
which proves the theorem.
\end{proof}

\begin{proof}[Proof of Theorem~\ref{lem-SG-1}]
Let $\norm{a}$ denote the Euclidean norm for a vector $a$
and $\tilde{Z}_i := [ \Upsilon^{-1/2} (X_i - \mu) ]$.
Given any $r > 0$, we have that 
\begin{align*}
\mathbb{P} \left( \norm{ n^{-1} \sum_{i=1}^n \tilde{Z}_i} > r \right)
&=_{(1)} 
\mathbb{P} \left(  \sum_{j=1}^p \left\{ n^{-1} \sum_{i=1}^n \tilde{Z}_{ij} \right\}^2  > r^2 \right) \\
&\leq_{(2)}  
\mathbb{P} \left(  \bigcup_{j=1}^p  \left\{ n^{-1} \sum_{i=1}^n \tilde{Z}_{ij} \right\}^2  > \frac{r^2}{p} \right) \\
&\leq_{(3)}  
\sum_{j=1}^p \mathbb{P} \left(   \left\{ n^{-1} \sum_{i=1}^n \tilde{Z}_{ij} \right\}^2  > \frac{r^2}{p} \right) \\
&=_{(4)}  
\sum_{j=1}^p \mathbb{P} \left(  \left| n^{-1} \sum_{i=1}^n \tilde{Z}_{ij} \right|  > \frac{r}{\sqrt{p}} \right) \\
&\leq_{(5)}  
2\sum_{j=1}^p 
\exp \left( - \frac{n r^2}{2 p \sigma_j^2} \right),
\end{align*}
where 
(1) is obtained by squaring both sides of the inequality,
(2) comes from the fact that if $a_1 + \ldots + a_p > c$, then $a_j > c/p$ for at least one $j$,
(3) follows because the probability of a union of events is bounded by the sum of probabilities,
(4) is by taking the square root of the both sides of the inequality,
and 
(5) is by applying Hoeffding bound for the sum of independent sub-Gaussian random variables
\citep[see, e.g., Proposition 2.5 of ][]{wainwright2019book}.
In particular, the result above implies that if we take $\sigma^2 = \max_{1 \leq j \leq p} \sigma_j^2$, 
\begin{align*}
\mathbb{P} \left\{ \frac{n}{\sigma^2} (\bar{X} - \mu)' \Upsilon^{-1} (\bar{X} - \mu) > \kappa_{SG, 1}(1-\alpha)  \right\}
\leq  
\alpha, 
\end{align*}
which proves the theorem.
\end{proof}

\begin{proof}[Proof of Theorem~\ref{lem-SG-2}]
It follows from Theorem 2.1 of \citet{HKT:2021} (by taking $A$ in their theorem to be the $p$-dimensional identity matrix) that 
\begin{align*}
\mathbb{P} \left\{ \frac{n}{\sigma^2} (\bar{X} - \mu)' \Upsilon^{-1} (\bar{X} - \mu) > 
p + 2 \sqrt{ p t } + 2 t 
  \right\}
\leq  
\exp(-t)
\end{align*}
for all $t > 0$.
This yields the theorem immediately.
\end{proof}

\begin{proof}[Proof of Lemma \ref{lem2}]
Define 
$$
\Delta_n := \sup_{\mathcal{S}} \abs{   \mathbf{P} (\mathcal{S}, \widehat{\Sigma}) - \mathbb{P} (W \in \mathcal{S}) }.
$$
Now
\[
\mathbf{P} (\mathcal{S}, \widehat{\Sigma}) - \mathbb{P} (W \in \mathcal{S})
= \mathbb{P} (\Sigma^{-1/2} \widehat{W} \in \mathcal{S}_\Sigma) - \mathbb{P} (\xi \in \mathcal{S}_\Sigma),
\]
where $\xi \sim N(0, I_{p \times p})$ and $\mathcal{S}_\Sigma$ is defined in \eqref{S-set-def}.
Therefore,
\begin{align*}
\Delta_n &= \sup_{\mathcal{S}} \abs{   \mathbb{P} (\Sigma^{-1/2} \widehat{W} \in \mathcal{S}_\Sigma) 
- \mathbb{P} (\xi \in \mathcal{S}_\Sigma) } \\
&\leq \text{TV} \left[ N(0, I_{p \times p}), N(0, \Sigma^{-1} \widehat{\Sigma}) \right],
\end{align*}
where $\text{TV}(P_1, P_2)$ is the total variation distance between distributions $P_1$ and $P_2$. By
Example 2.3 of \cite{Dasgupta08},
\[
\text{TV} \left[ N(0, I_{p \times p}), N(0, \Sigma^{-1} \widehat{\Sigma}) \right] \leq p 2^{p+1} \norm{ \Sigma^{-1} \widehat{\Sigma} -  I_{p \times p}}_{\mathrm{F}}
\]
where for any matrix $A$, 
$$
\norm{A}^2_{\mathrm{F}} := \sum_{j=1}^p \sum_{j=1}^p a_{jk}^2.
$$
Define $\omega := \widehat{\Sigma} - \Sigma$. 
Then,
\[
\Sigma^{-1} \widehat{\Sigma} - I_{p \times p} = \Sigma^{-1} (\widehat{\Sigma} - \Sigma) = \Sigma^{-1} \omega,
\]
\[
\abs{( \Sigma^{-1} \omega)_{jk} } 
\leq \sum_{\ell=1}^p \abs{ \Sigma_{j\ell}^{-1} \omega_{\ell k}}
\leq C_{\Sigma} \sum_{\ell=1}^p \abs{  \omega_{\ell k}}
\]
and
\[
\norm{ \Sigma^{-1} \widehat{\Sigma} -  I_{p \times p} }_{\mathrm{F}}
\leq C_{\Sigma} p^{1/2} \left[ \sum_{k=1}^p \left( \sum_{\ell=1}^p \abs{ \omega_{\ell k} } \right)^2 \right]^{1/2}.
\]
To obtain the conclusion of the theorem, it remains to show that $\abs{ \omega_{jk} } \leq r_n(t)$
with probability at least $1 - 2 e^{-t}$. 
We prove this claim below.

Write 
\[
\omega = n^{-1} \sum_{i=1}^n \left[ (X_i - \mu)(X_i - \mu)'  - \Sigma \right] -   (\bar{X} - \mu)(\bar{X} - \mu)'.
\]
Then,  we have that
\begin{align*}
&\mathbb{P} \left[ \max_{j,k} \abs{ \omega_{jk} } \geq r_n(t)  \right] \\
&\leq
\mathbb{P} \left[ \max_{1 \leq j,k \leq p} 
\abs{ n^{-1} \sum_{i=1}^n \left[ (X_{ij}-\mu)(X_{ik}-\mu)' - \Sigma_{jk} \right] }
 \geq \frac{r_n(t)}{2} \right]
 +
\mathbb{P} \left[ \left\{ \max_{1 \leq j \leq p} 
\abs{ \bar{X}_j - \mu_j } \right\}^2
 \geq \frac{r_n(t)}{2} \right]. 
 \end{align*}
Recall that 
\begin{align*}
\Sigma_{jj} = \mathbb{E} \left[ \abs{ X_{ij} - \mu_j }^2 \right]
\; \text{ and } \;
\Gamma_{jk} = \mathbb{E} \left[ \abs{ (X_{ij} - \mu_j)(X_{ik} - \mu_k) - \Sigma_{jk} }^2 \right]. 
\end{align*}
Define
\begin{align*}
\lambda (\kappa_1, n, p) :=
\sqrt{\frac{2 \log (2p) }{n}} + \frac{\kappa_1 \log (2p) }{n}.
\end{align*}
By 
Bernstein's inequality for the maximum of $p$ averages 
(see, e.g., \citet[Lemma 14.13]{buhlmann2011statistics}),
\[
\mathbb{P} \left[ \max_{1 \leq j \leq p} 
\abs{ \Sigma_{jj}^{-1/2} (\bar{X}_j - \mu_j) } \geq  \frac{\kappa_1 t}{\Sigma_{jj}^{1/2} n} + \sqrt{\frac{2t}{n} } 
+ \lambda \left( \Sigma_{jj}^{-1/2} \kappa_1, n, p \right) \right]  \leq 
 \exp \left( -t \right)
\]
and
\[
\mathbb{P} \left[ \max_{1 \leq j,k \leq p} 
\abs{ \Gamma_{jk}^{-1/2}  n^{-1} \sum_{i=1}^n \left[ (X_{ij}-\mu)(X_{ik}-\mu)' - \Sigma_{jk} \right] } \geq \frac{\kappa_1 t}{\Gamma_{jk}^{1/2} n} + \sqrt{\frac{2t}{n} } 
+ \lambda (\Gamma_{jk}^{-1/2} \kappa_1, n, p^2) \right]  \leq 
 \exp \left( -t \right).
\]
Suppose that for each $j$,
\begin{align}\label{tech:condition:app}
\max\{ t, \log(2p) \} \leq \min  \left\{ \frac{n}{4 \kappa_1}, \frac{n}{32 \Sigma_{jj}} \right\}.
\end{align}
Under \eqref{tech:condition:app},
\begin{align*}
&\mathbb{P} \left[ \left\{ \max_{1 \leq j \leq p} 
\abs{ \bar{X}_j - \mu_j } \right\}^2 \geq  \frac{\kappa_1 t}{n} + \Sigma_{jj}^{1/2} \sqrt{\frac{2t}{n} } 
+ \Sigma_{jj}^{1/2} \lambda \left( \Sigma_{jj}^{-1/2} \kappa_1, n, p \right) \right]  \\
&\leq 
 \mathbb{P} \left[ \left\{ \max_{1 \leq j \leq p} 
\abs{ \bar{X}_j - \mu_j } \right\}^2
 \geq 
 \left\{
 \frac{\kappa_1 t}{n} + \Sigma_{jj}^{1/2} \sqrt{\frac{2t}{n} } 
+ \Sigma_{jj}^{1/2} \lambda \left( \Sigma_{jj}^{-1/2} \kappa_1, n, p \right) \right\}^2 \right] \\
& \leq
 \exp \left( -t \right).
\end{align*}
Therefore, if we can choose $r_n(t)$ by 
\begin{align}\label{choice:rt}
\frac{r_n(t)}{2} & :=
\max_{j,k} \left\{ \Sigma_{jj}^{1/2}, \Gamma_{jk}^{1/2} \right\} 
\left\{ \sqrt{\frac{2t}{n} } 
+ 
 \sqrt{\frac{2 \log (2p) }{n}} \right\}
+ \frac{\kappa_1 t}{n}  
+ \frac{\kappa_1 \log (2p) }{n} 
\end{align}
for  $t > 0$, 
we have that 
\[
\mathbb{P} \left[ \max_{j,k} \abs{ \omega_{jk} } \leq r_n(t) \right]
\geq 1 - 2 e^{-t}.
\]
To simplify the upper bound $r_n(t)$, we now strengthen \eqref{tech:condition:app} to \eqref{tech:condition:further} stated in the theorem.
Then, we can take 
\begin{align*}
r_n(t) & :=
  8 \sqrt{\frac{2 \kappa_0 t}{n} }, 
\end{align*}
which proves the claim. 
\end{proof}

\begin{proof}[Proof of Theorem \ref{thm1}]
Write
\begin{align*}
\abs{ \mathbb{P} (\bar{Z} \in \mathcal{S}) - \mathbf{P} (\mathcal{S}, \widehat{\Sigma}) }
&= \abs{ \left[ \mathbb{P} (\bar{Z} \in \mathcal{S}) - \mathbb{P} (W \in \mathcal{S}) \right] -  \left[ \mathbf{P} (\mathcal{S}, \widehat{\Sigma}) - \mathbb{P} (W \in \mathcal{S}) \right]}  \\
&\leq \abs{  \mathbb{P} (\bar{Z} \in \mathcal{S}) - \mathbb{P} (W \in \mathcal{S})  } 
+ \abs{   \mathbf{P} (\mathcal{S}, \widehat{\Sigma}) - \mathbb{P} (W \in \mathcal{S}) }.
\end{align*}
Thus, the theorem follows immediately by combining Theorem \ref{lem1} and Lemma \ref{lem2}. 
\end{proof}

\begin{proof}[Proof of Theorem \ref{thm3}]
Combining Theorems \ref{thm2} and \ref{thm1} yields Theorem \ref{thm3}.
\end{proof}

\begin{proof}[Proof of Lemma~\ref{lem:sample-var}]
By Lemma 1.12 of \cite{rigollet2015high},
$( \tilde{X}_{ij}^2 - \mathbb{E} [ \tilde{X}_{ij}^2 ] )$
is sub-exponential with parameter $16 \sigma_j^2$. 
This in turn implies that  by Theorem 1.13 of \cite{rigollet2015high}, which is a version of 
Bernstein’s inequality, we have, for any $s > 0$,  
\begin{align*}
 \mathbb{P} \left[ 
\left| n^{-1} \sum_{i=1}^n  \tilde{X}_{ij}^2 - \mathbb{E} [ \tilde{X}_{ij}^2 ] \right| > s
 \right] 
 &\leq
 \exp \left[ 
 - \frac{n}{2} \min 
 \left\{ \frac{s^2}{ (16 \sigma_j^2)^2 },  \frac{s}{ 16 \sigma_j^2 } \right\} 
 \right].
\end{align*}
Take $s =  (t / n)^{1/2}$. 
Then, for any $0 < t \leq (16 \sigma_j^2)^2 n$, 
\begin{align}
 \mathbb{P} \left[ 
\left| n^{-1} \sum_{i=1}^n  \tilde{X}_{ij}^2 - \mathbb{E} [ \tilde{X}_{ij}^2 ] \right| >  \left( \frac{t}{n} \right)^{1/2}
 \right] 
\leq e^{- t / (512 \sigma_j^4) }.  \label{lem-res1}
\end{align}
As before, by Hoeffding bound for the sum of independent sub-Gaussian random variables
\citep[see, e.g., Proposition 2.5 of ][]{wainwright2019book}, we have that 
\begin{align*}
\mathbb{P} \left[ 
 \left| n^{-1} \sum_{i=1}^n \tilde{X}_{ij} - \mathbb{E} [ \tilde{X}_{ij} ] \right|
\geq \left( \frac{ t}{n} \right)^{1/2} 
\right] 
\leq
2 e^{-t / (2 \sigma_j^2 )}.
\end{align*}
Furthermore, using the fact that
\begin{align*}
\left| \left( n^{-1} \sum_{i=1}^n \tilde{X}_{ij} \right)^2 - ( \mathbb{E} [ \tilde{X}_{ij} ] )^2 \right|
&\leq
\left( n^{-1} \sum_{i=1}^n \tilde{X}_{ij} - \mathbb{E} [ \tilde{X}_{ij} ] \right)^2
+ 2 \left| \mathbb{E} [ \tilde{X}_{ij} ] \right| \left| n^{-1} \sum_{i=1}^n \tilde{X}_{ij} - \mathbb{E} [ \tilde{X}_{ij} ] \right| \\
&\leq
3 \left( n^{-1} \sum_{i=1}^n \tilde{X}_{ij} - \mathbb{E} [ \tilde{X}_{ij} ] \right)^2
+ 2 \left| n^{-1} \sum_{i=1}^n \tilde{X}_{ij} \right| \left| n^{-1} \sum_{i=1}^n \tilde{X}_{ij} - \mathbb{E} [ \tilde{X}_{ij} ] \right|,
\end{align*}
we have that 
\begin{align}\label{lem-res2}
\begin{split}
& \mathbb{P} \left[ 
\left| \left( n^{-1} \sum_{i=1}^n \tilde{X}_{ij} \right)^2 - ( \mathbb{E} [ \tilde{X}_{ij} ] )^2 \right|
\geq
3 \left( \frac{ t}{n} \right)  + 
2 \left| n^{-1} \sum_{i=1}^n \tilde{X}_{ij} \right| \left( \frac{ t}{n} \right)^{1/2} 
\right] \\
&\leq
\mathbb{P} \left[ 
 \left| n^{-1} \sum_{i=1}^n \tilde{X}_{ij} - \mathbb{E} [ \tilde{X}_{ij} ] \right|
\geq \left( \frac{ t}{n} \right)^{1/2} 
\right] 
\leq
2 e^{-t / (2 \sigma_j^2)}.  
\end{split}
\end{align}
The desired result follows by combining \eqref{lem-res2} with \eqref{lem-res1}.
\end{proof}

\begin{proof}[Proof of Lemma~\ref{lem:sample-var-proxy-equ}]
By Assumption~\ref{a4}, $\tilde{Z}_{ij} = [ \Upsilon^{-1/2} (X_i - \mu) ]_j$
is a mean-zero scalar sub-Gaussian random variable with variance proxy $\sigma^2$, which is denoted by
$\tilde{Z}_{ij} \sim  \textrm{subG}(\sigma^2)$.
Let 
$\bar{\tilde{Z}}_j := n^{-1/2} \sum_{i=1}^n \tilde{Z}_{ij}$. Then $\bar{\tilde{Z}}_j \rightarrow_d N(0,\mathbb{E} [ \tilde{Z}_{ij}^2 ])$ and $\bar{\tilde{Z}}_j \sim  \textrm{subG}(\sigma^2)$ for all $n$.
Let $\varphi_{\tilde Z}$ and  $\varphi_n$, respectively, denote the moment-generating functions of 
$\tilde{Z}_{ij}$ and $\bar{\tilde{Z}}_j$.
Then
\begin{align}
\varphi_n(t) &\leq \exp (\sigma^2 t^2 / 2), \label{subG-eq-no1-z}
\end{align}
and
\begin{align*}
\varphi_n(t) &= \mathbb{E} \exp(  \bar{\tilde{Z}}_j t )
= \mathbb{E} \exp \left( \sum_{i=1}^n \frac{t \tilde{Z}_{ij}}{n^{1/2} } \right)
= \prod_{i=1}^n \mathbb{E} \exp \left( \frac{t \tilde{Z}_{ij}}{n^{1/2} } \right)
= \left[ \varphi_{\tilde{Z}} \left( \frac{t}{n^{1/2}} \right) \right]^n.
\end{align*}
Arguments identical to those used to prove the Lindeberg-L\'{e}vy central limit theorem show that
\begin{align}\label{subG-eq-no2-z}
\lim_{n \rightarrow \infty} \varphi_n(t) &= \exp \left( \mathbb{E} [ \tilde{Z}_{ij}^2 ] t^2 / 2 \right).
\end{align}
It follows from \eqref{subG-eq-no1-z} that $\mathbb{E} [ \tilde{Z}_{ij}^2 ] \leq \sigma^2$.
\end{proof}

\begin{proof}[Proof of Theorem \ref{thm3-SG}]
Note that 
\begin{align*}
&
\mathbb{P} \left\{ n (\bar{X} - \mu)' \Upsilon^{-1} (\bar{X} - \mu) > 
\hat{\sigma}^2 (t)
\kappa_{SG, 1}(1-\alpha)  \right\} \\
&=
\mathbb{P} \left\{ n (\bar{X} - \mu)' \Upsilon^{-1} (\bar{X} - \mu) > \hat{\sigma}^2 (t) \kappa_{SG, 1}(1-\alpha)  
\ \ \text{ and } \ \
\hat{\sigma}^2 (t) < \sigma^2
\right\} \\
&\;\;\;+
\mathbb{P} \left\{ n (\bar{X} - \mu)' \Upsilon^{-1} (\bar{X} - \mu) > \hat{\sigma}^2 (t) \kappa_{SG, 1}(1-\alpha)  
\ \ \text{ and } \ \
\hat{\sigma}^2 (t) \geq \sigma^2
\right\} \\
&\leq 
\mathbb{P} \left\{ \hat{\sigma}^2 (t) < \sigma^2
\right\} +
\mathbb{P} \left\{ n (\bar{X} - \mu)' \Upsilon^{-1} (\bar{X} - \mu) > \sigma^2 \kappa_{SG, 1}(1-\alpha)  
\right\}.
\end{align*}
Then, the theorem follows from Lemma~\ref{lem:sample-var} and Theorem~\ref{thm2}.
\end{proof}

\section{Continuous Covariates}\label{subsec:cont}

In this section, we consider the case in which $g_1$ and $g_2$ depend on a continuous covariate $\nu$ in addition to $(\psi, \mu)$. 
This situation occurs, for example, in applications where some
  observed variables are group averages and others are continuously distributed characteristics of individuals. 
 {If $\nu$ is discrete, the results of Section \ref{subsec:analysis} apply after replacing problem \eqref{obj0}-\eqref{g-constraint-est} with \eqref{obj2}-\eqref{g-constraint-est-cont-dis} below.}
When there is a continuous covariate, $\nu$, \eqref{obj0}-\eqref{g-constraint-est}
become 
\begin{align}\label{obj0-cont}
\hat{J}_{+}(\bar{X}) := \max_{\psi, m} f(\psi)
 \; \text{ and } \; 
\hat{J}_{-}(\bar{X}) := \min_{\psi, m} f(\psi)
\end{align}
subject to
\begin{subequations}\label{g-constraint-est-cont}
\begin{align}
g_1(\psi, m, \nu) &\leq 0 \text{ for every $\nu$}, \label{g1-constraint-est-cont} \\
g_2(\psi, m, \nu) &= 0 \text{ for every $\nu$},  \label{g2-constraint-est-cont} \\
\psi &\in \Psi, \\
n^{1/2} (\bar{X} - m) &\in \mathcal{S}. \label{S-constraint-est-cont}
\end{align}
\end{subequations}
Thus, there is a continuum of constraints. We form a discrete approximation to 
\eqref{g1-constraint-est-cont}-\eqref{g2-constraint-est-cont} by restricting $\nu$ to a discrete grid of points. Let $L$ denote the number of grid points. We give conditions under which the optimal values of the objective functions of the discretized version of
\eqref{obj0-cont}-\eqref{g-constraint-est-cont} converge to 
$\hat{J}_{+}(\bar{X})$ and $\hat{J}_{-}(\bar{X})$
as $L \rightarrow \infty$. 
To minimize the notational
complexity of the following discussion we assume that $\nu$ is a scalar. The generalization to a vector is straightforward. We also assume that $\nu$ is contained in a compact set which, without further loss of generality, we take to be $[0,1]$.

To obtain the grid approximation, let $0 = v_0 < v_1 < v_2 < \ldots < v_L = 1$ be a grid of equally space points
in $[0,1]$. The distance between grid points is $1/(L-1)$. Approximate problem \eqref{obj0-cont}-\eqref{g-constraint-est-cont} by
\begin{align}\label{obj2}
\tilde{J}_{+}(\bar{X}) := \max_{\psi, m} f(\psi)
 \; \text{ and } \; 
\tilde{J}_{-}(\bar{X}) := \min_{\psi, m} f(\psi)
\end{align}
subject to the constraints: 
\begin{subequations}\label{g-constraint-est-cont-dis}
\begin{align}
g_1(\psi, m, \nu_\ell) &\leq 0; \; \ell=1,\ldots,L,   \label{g1-constraint-est-cont-dis} \\
g_2(\psi, m, \nu_\ell) &= 0; \; \ell=1,\ldots,L,  \label{g2-constraint-est-cont-dis} \\
\psi &\in \Psi, 
\intertext{and} 
n^{1/2} (\bar{X} - m) &\in \mathcal{S}. \label{S-constraint-est-cont-dis}
\end{align}
\end{subequations}
We then have

\begin{theorem}\label{cont-thm}
Assume that $f$ is continuous, $\nu \in [0,1]$,
 and $m$ in \eqref{g-constraint-est-cont-dis} is contained in a compact set $\mathcal{M}$. Moreover,
\begin{align*}
\abs{g_j(\psi, m; x) - g_j(\psi, m; v_\ell)   } &\leq C \abs{ x - v_\ell } \\
\abs{g_j(\psi, m; x) - g_j(\psi, m; v_{\ell+1})   } &\leq C \abs{ x - v_{\ell+1} }
\end{align*}
for $j =1$ or $2$, some $C < \infty$,  and all $\psi \in \Psi$, all $m \in \mathcal{M}$, all $x \in [v_\ell, v_{\ell+1}] \in [0,1]$. Then
\begin{align}\label{cont-result}
\lim_{L \rightarrow \infty} \tilde{J}_{+} = \hat{J}_{+}
\; \text{ and } \;
\lim_{L \rightarrow \infty} \tilde{J}_{-} = \hat{J}_{-}.
\end{align}
\end{theorem}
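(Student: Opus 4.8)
The plan is to sandwich the grid value $\tilde J_{+}$ between $\hat J_{+}$ and the optimal value of a slightly relaxed version of the continuum problem, and then let the relaxation vanish. For $\eta \ge 0$, let $\hat J_{+}^{(\eta)}$ denote the optimal value of $\max_{\psi,m} f(\psi)$ subject to $g_1(\psi,m,\nu) \le \eta$ and $\abs{g_2(\psi,m,\nu)} \le \eta$ for every $\nu \in [0,1]$, $\psi \in \Psi$, and $n^{1/2}(\bar X - m) \in \mathcal{S}$; then $\hat J_{+}^{(0)} = \hat J_{+}$. Since the grid problem \eqref{obj2}--\eqref{g-constraint-est-cont-dis} imposes only a subset of the constraints of the continuum problem \eqref{obj0-cont}--\eqref{g-constraint-est-cont}, its feasible set is larger and $\tilde J_{+} \ge \hat J_{+}$ for every $L$. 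Conversely, if $(\psi,m)$ is feasible for the grid problem, then for every $\nu$ the Lipschitz hypotheses applied at a neighboring grid point give $g_1(\psi,m,\nu) \le C/(L-1)$ and, using $g_2 = 0$ at the grid points, $\abs{g_2(\psi,m,\nu)} \le C/(L-1)$; hence $(\psi,m)$ is feasible for the $\eta_L$-relaxed problem with $\eta_L := C/(L-1)$, so $\tilde J_{+} \le \hat J_{+}^{(\eta_L)}$. Because $\eta_L \to 0$, the claim $\tilde J_{+} \to \hat J_{+}$ reduces to proving $\lim_{\eta\downarrow 0}\hat J_{+}^{(\eta)} = \hat J_{+}$.

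To prove this, observe that $\eta \mapsto \hat J_{+}^{(\eta)}$ is nondecreasing and satisfies $\hat J_{+}^{(\eta)} \ge \hat J_{+}$, so $L^* := \lim_{\eta\downarrow0}\hat J_{+}^{(\eta)}$ exists with $L^* \ge \hat J_{+}$; it remains to show $L^* \le \hat J_{+}$. Pick $\eta_k \downarrow 0$ and, for each $k$, a point $(\psi_k,m_k)$ feasible for the $\eta_k$-relaxed problem with $f(\psi_k) \ge \hat J_{+}^{(\eta_k)} - 1/k$. Since $\psi_k \in \Psi$ and $m_k \in \mathcal{M}$ are confined to compact sets, a subsequence converges to some $(\psi^*,m^*)$. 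For each fixed $\nu$, continuity of $g_1$ and $g_2$ in $(\psi,m)$ together with closedness of $\mathcal{S}$ give, along that subsequence, $g_1(\psi^*,m^*,\nu) \le 0$, $g_2(\psi^*,m^*,\nu) = 0$, and $n^{1/2}(\bar X - m^*) \in \mathcal{S}$; thus $(\psi^*,m^*)$ is feasible for the continuum problem and $f(\psi^*) \le \hat J_{+}$. Continuity of $f$ then gives $L^* = \lim_k f(\psi_k) = f(\psi^*) \le \hat J_{+}$, so $L^* = \hat J_{+}$. Combined with the sandwich, $\tilde J_{+} \to \hat J_{+}$.

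The case of $\tilde J_{-}$ is the mirror image: one obtains $\hat J_{-}^{(\eta_L)} \le \tilde J_{-} \le \hat J_{-}$ for the analogously relaxed minimization problems, $\eta \mapsto \hat J_{-}^{(\eta)}$ is nonincreasing, and the same compactness-and-continuity argument identifies $\lim_{\eta\downarrow0}\hat J_{-}^{(\eta)} = \hat J_{-}$, which yields the second half of \eqref{cont-result}. (When the continuum problem is infeasible, $\hat J_{+} = -\infty$ and $\hat J_{-} = +\infty$; then the nested compact feasible sets of the $\eta_L$-relaxed problems are eventually empty by the finite intersection property, hence so are the grid feasible sets, and $\tilde J_{+} = -\infty$, $\tilde J_{-} = +\infty$ as well, so \eqref{cont-result} still holds.) The main obstacle is the second paragraph: showing that limit points of the near-optimal solutions of the relaxed problems are feasible for the un-relaxed continuum problem. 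This step relies on compactness of the $(\psi,m)$-domain and on continuity of $g_1$ and $g_2$ in $(\psi,m)$; the rest is bookkeeping with the mesh $1/(L-1)$ and the Lipschitz constant $C$.
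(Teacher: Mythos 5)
Your proof is correct and shares the paper's basic skeleton --- sandwiching the grid value between $\hat{J}_{+}$ and the value of a problem whose constraints are relaxed by the tolerance $C/(L-1)$ coming from the Lipschitz condition --- but it resolves the key convergence step differently. The paper, after establishing the same sandwich, packages the relaxed problems into the set $\Pi$ of pairs $(\xi,\eta)$ admitting a feasible witness, asserts that $\Pi$ is closed, and invokes Proposition 3.3 of \cite{JW90} (first restricting to rational $\nu$ and then extending by continuity of $g$ in $\nu$) to conclude that the relaxed values converge to $\hat{J}_{+}$. You instead prove this limit directly: monotonicity of $\eta \mapsto \hat{J}_{+}^{(\eta)}$ gives existence of the limit and the inequality $\geq \hat{J}_{+}$, and a compactness argument --- extracting a convergent subsequence of near-optimal points $(\psi_k,m_k)$ and showing its limit is feasible for the un-relaxed continuum problem --- gives the reverse inequality. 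This buys a self-contained, more elementary argument that dispenses with both the external citation and the rational-$\nu$ device, and it also treats the infeasible case explicitly. The price, which you correctly flag, is that your limit-point step uses continuity of $g_1$ and $g_2$ in $(\psi,m)$, which is not among the stated hypotheses; note, however, that the paper's own assertion that $\Pi$ is closed implicitly requires the same kind of regularity (together with compactness of $\Psi$, $\mathcal{M}$, and $\mathcal{S}$), so you are not assuming more than the paper's proof effectively does --- you are merely making the assumption visible.
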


Theorem \ref{cont-thm} implies that under weak smoothness assumptions, a sufficiently dense grid provides an arbitrarily accurate approximation to the continuously constrained optimization problem \eqref{obj0-cont}-\eqref{g-constraint-est-cont}.

\begin{proof}[Proof of Theorem \ref{cont-thm}] 
We focus on the maximization problem since the minimization problem can be analyzed analogously. 

Let $(\psi_L, \mu_L)$ denote the optimal solution to the maximization version of \eqref{obj2}-\eqref{g-constraint-est-cont-dis}.
Define $g(\psi, \mu; \nu) = [ g_1(\psi, \mu; \nu), g_2(\psi, \mu; \nu), -g_2(\psi, \mu; \nu) ]$, so that $g(\psi, \mu; \nu) \leq 0$ componentwise.
Define $\ell(\nu) := \text{arg min}_\ell \abs{\nu - \nu_\ell}$.
Then
\[
\sup_{\nu \in [0,1]} \abs{ g(\psi, \mu; \nu) - g(\psi, \mu; \ell(\nu)) } \leq C/(L-1)
\]
componentwise and $g(\psi, \mu; \nu_\ell) \leq 0$ implies that
\[
g(\psi, \mu; \nu) \leq C/(L-1)
\]
componentwise uniformly over $\nu \in [0,1]$. 
Therefore, $(\psi_L, \mu_L)$ is a feasible solution to
\begin{align}\label{obj-c1}
J_{+}^\ast := \max_{\psi, \mu} f(\psi)
\end{align}
subject to the new constraint:
\begin{align*}
g(\psi, \mu; \nu) \leq C/(L-1)  
 \; \text{for all $\nu \in [0,1]$, $\nu = \mathrm{rational}$, and} \; \
n^{1/2} (\bar{X} - \mu) \in \mathcal{S}.
\end{align*}
Consequently, $J_{+}^\ast \geq \tilde{J}_{+} \geq J_{+}$,
where $J_{+} = \max f(\psi)$ subject to $g_1 (\psi,\mu,x) \leq 0$, $g_2(\psi,\mu,x)=0$ , and $\psi \in \Psi$.
 Define
\begin{align*}
\Pi := \Big\{ \xi \geq 1, \eta: &\text{ there is $(\psi, \mu)$ such that $n^{1/2} (\bar{X} - \mu) \in \mathcal{S}$,
$f(\psi) \leq \eta$,} \\ 
&\text{ and $g(\varphi, \mu; \nu) \leq C/\xi$ for all $\nu \in [0,1]$} \Big\}.
\end{align*}
Note that $\Pi$ is a closed set. Therefore, by Proposition 3.3 of \cite{JW90}, 
$J^\ast_{+} \rightarrow J_{+}$ 
as $L \rightarrow \infty$ if the constraints are restricted to rational
values of $\nu \in [0,1]$. It follows from continuity of $g$ as a function of $\nu$ that the
constraints hold for all $\nu \in [0,1]$.
\end{proof}

\section{Minsker's (2015) Median of Means Method}\label{sec:alt-theory}

In this  appendix, we describe how to carry out inference based on  \cite{Minsker2015}. 
In particular, we consider two versions of the median of means: the one based on geometric median and 
the other using coordinate-wise medians. 
 {\cite{lugosi2019} propose a different version of the median of means estimator that has theoretically better properties but is more difficult to compute.} 

First, for the case of geometric median, let $\alpha_\ast := 7/18$ and $p_\ast := 0.1$. 
Define
\begin{align*}
\psi (\alpha_\ast; p_\ast) = (1-\alpha_\ast) \log \frac{1-\alpha_\ast}{1-p_\ast} + \alpha_\ast \log \frac{\alpha_\ast}{p_\ast}.
\end{align*}
Let $0 < \delta < 1$ be the level of the confidence set and set
\begin{align}\label{def-k}
k := \left\lfloor  \frac{\log (1/\delta)}{\psi (\alpha_\ast; p_\ast)} \right\rfloor + 1.
\end{align}
Assume that $\delta$ is small enough that $k \leq n/2$. 
Divide the sample $X_1,\ldots,X_n$ into $k$ disjoint groups $G_1,\ldots,G_k$ of size $\left\lfloor  \frac{n}{k} \right\rfloor$ each, and define
\begin{align*}
\hat{\mu}_j &:= \frac{1}{| G_j |} \sum_{i \in G_j} X_i, \; j=1,\ldots,k, \\
\hat{\mu} &:= \text{G.med}(\hat{\mu}_1,\ldots,\hat{\mu}_k),
\end{align*}
where \text{G.med} refers to the geometric median. 
See \cite{Minsker2015} and references therein for details on the geometric median.
The intuition behind $\hat{\mu}$ is that it is a robust measure of the population mean vector $\mu$ since 
each subsample mean vector $\hat{\mu}_j$ is an unbiased estimator for $\mu$ and the aggregation method via  the geometric median is robust to outliners.
Because of this feature, it turns out that the finite sample bound for the Euclidean norm distance between $\hat{\mu}$ and $\mu$ depends only on  $\text{tr}(\Sigma)$, but not on the higher moments \citep[see Corollary 4.1 of][]{Minsker2015}. This is the main selling point of the median of means since the finite sample probability bound for the usual sample mean assumes the existence of a higher moment (e.g. the third absolute moment in \citet{Bentkus03} and Theorem \ref{lem1} in Section \ref{subsec:analysis}). 

Second, \cite{Minsker2015} also considered using coordinate-wise medians instead of using the geometric median.
In this case, let $\alpha_\ast = 1/2$ and $p_\ast = 0.12$. Then $k$ is redefined via \eqref{def-k}. Let $\hat{\mu}_\ast$ denote the vector of coordinate-wise medians.

To estimate $\text{tr}(\Sigma)$, \cite{Minsker2015} proposed the following:
\begin{align*}
\hat{T}_j &:= \frac{1}{| G_j |} \sum_{i \in G_j} \norm{ X_i - \hat{\mu}_j}^2, \; j=1,\ldots,k, \\
\hat{T} &:= \text{med}(\hat{T}_1,\ldots,\hat{T}_k).
\end{align*}
where $\norm{ a }$  is the Euclidean norm of a vector $a$.
Let $B(h,r)$ denote the ball of radius $r$ centered at $h$
and let 
\begin{align*}
r_n &:= 11 \sqrt{2} \sqrt{\hat{T} \frac{\log(1.4/\delta)}{n} }, \\
r_{n,\ast} &:= 4.4 \sqrt{2} \sqrt{\hat{T} \frac{\log(1.6 d_\mu /\delta)}{n - 2.4\log(1.6 d_\mu /\delta)} },
\end{align*} 
where $d_\mu$ is the dimension of $\mu$.

\begin{lemma}[\cite{Minsker2015}]\label{minsker-thm}
Assume that 
\begin{align}\label{Minsker-ineq}
15.2 \sqrt{\frac{\mathbb{E} \norm{ X - \mu}^4 - (\text{tr}(\Sigma))^2}{(\text{tr}(\Sigma))^2} }
\leq
\left( \frac{1}{2} - 178 \frac{\log(1.4/\delta)}{n} \right) \sqrt{\frac{n}{\log(1.4/\delta)} }.
\end{align}
Then
\begin{align}\label{Minsker-result}
\mathbb{P} \left[ \mu \in B(\hat{\mu}, r_n) \right] \geq 1 - 2\delta
\ \ \text{ and } \ \
\mathbb{P} \left[ \mu \in B(\hat{\mu}_\ast, r_{n,\ast}) \right] \geq 1 - 2\delta.
\end{align}
\end{lemma}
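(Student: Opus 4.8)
The statement is, up to notation, the content of the median-of-means corollaries in \cite{Minsker2015} (Corollary~4.1 for the geometric median, together with its coordinate-wise analogue), so the plan is to assemble those pieces and match the constants rather than to reprove everything from scratch. The backbone is the ``median-of-means boosting'' principle. Split the sample into $k$ blocks of size $\lfloor n/k\rfloor$; each block mean $\hat\mu_j$ is unbiased for $\mu$ with $\mathbb{E}\norm{\hat\mu_j-\mu}^2=\text{tr}(\Sigma)/\lfloor n/k\rfloor$, so by Markov's inequality $\hat\mu_j$ lies in a Euclidean ball $B(\mu,\rho)$ of a suitable radius $\rho$ with probability at least $1-p_\ast$. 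The deterministic geometric-median lemma (Lemma~2.1 of \cite{Minsker2015}) gives that if more than $(1-\alpha_\ast)k$ of the points $\hat\mu_1,\dots,\hat\mu_k$ lie in $B(\mu,\rho)$ then $\text{G.med}(\hat\mu_1,\dots,\hat\mu_k)\in B(\mu,C_{\alpha_\ast}\rho)$ for an explicit constant $C_{\alpha_\ast}$. The number of ``bad'' blocks is stochastically dominated by a $\text{Binomial}(k,p_\ast)$ variable; a Chernoff bound together with the definition \eqref{def-k} of $k$ (note that $\psi(\alpha_\ast;p_\ast)$ is exactly the Bernoulli relative entropy, so $e^{-k\psi(\alpha_\ast;p_\ast)}\le\delta$) forces at most $\alpha_\ast k$ bad blocks with probability at least $1-\delta$; hence $\hat\mu\in B(\mu,C_{\alpha_\ast}\rho)$ off an event of probability $\delta$.

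Because $\rho$ depends on the unknown $\text{tr}(\Sigma)$ through $\mathbb{E}\norm{\hat\mu_j-\mu}^2$, the second step replaces $\text{tr}(\Sigma)$ by $\hat T=\text{med}(\hat T_1,\dots,\hat T_k)$. Running the same boosting argument on the scalars $\norm{X_i-\mu}^2$ — whose mean is $\text{tr}(\Sigma)$ and whose relative variance is $\big(\mathbb{E}\norm{X-\mu}^4-(\text{tr}(\Sigma))^2\big)/(\text{tr}(\Sigma))^2$ — shows that $\hat T$ is within a constant factor of $\text{tr}(\Sigma)$ with probability at least $1-\delta$; this is precisely where assumption \eqref{Minsker-ineq} enters, being exactly the inequality under which the Markov/Chernoff estimate for $\hat T$ closes at the required radius. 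A union bound over the two failure events (geometric median far from $\mu$, and $\hat T$ a poor proxy for $\text{tr}(\Sigma)$) gives $1-2\delta$, and substituting Minsker's explicit constants produces the stated $r_n=11\sqrt2\,\sqrt{\hat T\log(1.4/\delta)/n}$.

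For the coordinate-wise statement I would run the argument separately on each of the $d_\mu$ coordinates of $\mu$, using the scalar median construction with $(\alpha_\ast,p_\ast)=(1/2,0.12)$ and the correspondingly redefined $k$. Each coordinate of $\hat\mu_\ast$ lands within the one-dimensional tolerance of $\mu_j$ off an event of probability $\delta/d_\mu$; a union bound over the $d_\mu$ coordinates — the source of the $\log(1.6d_\mu/\delta)$ appearing in $r_{n,\ast}$ — together with the same control of $\text{tr}(\Sigma)$ via $\hat T$ yields the $1-2\delta$ guarantee, the factor $\sqrt{d_\mu}$ being absorbed when one passes from coordinatewise bounds to the Euclidean ball $B(\hat\mu_\ast,r_{n,\ast})$.

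The only genuine obstacle is bookkeeping: the numerical constants ($11\sqrt2$, $4.4\sqrt2$, the $178$ and $15.2$ in \eqref{Minsker-ineq}, the $1.4$ and $1.6$ inside the logarithms, and the $2.4\log(1.6d_\mu/\delta)$ correction in $r_{n,\ast}$) are the outcome of specific numerical optimizations carried out in \cite{Minsker2015} — choices of $\rho$, of $\alpha_\ast$, and the balancing of the two union-bound terms — and reproducing them in full would add length without insight. I would therefore present the lemma as a direct consequence of the cited corollaries of \cite{Minsker2015}, verifying only that our $k$, $\hat\mu$, $\hat\mu_\ast$, $\hat T$, $r_n$, and $r_{n,\ast}$ coincide with theirs under the dictionary $d=d_\mu$.
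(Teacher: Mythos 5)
Your proposal is correct and follows essentially the same route as the paper: the paper's proof simply invokes Minsker's results directly (Corollary 4.2 of \cite{Minsker2015} for the geometric-median ball, and equation (4.4) combined with Proposition 4.1 of \cite{Minsker2015} for the coordinate-wise version), which is exactly what you do after sketching the underlying boosting argument. The only minor point is bibliographic: the data-driven radius with $\hat T$ is Minsker's Corollary 4.2 rather than 4.1, but your plan of matching $k$, $\hat\mu$, $\hat\mu_\ast$, $\hat T$, $r_n$, $r_{n,\ast}$ to his statements handles this.
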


\begin{proof}[Proof of Lemma \ref{minsker-thm}]
The result on the geometric median is the exactly the same as Corollary 4.2 of \cite{Minsker2015}.
The case for the vector of coordinate-wise medians follows from combining equation (4.4) in  \cite{Minsker2015} with Proposition 4.1 of \cite{Minsker2015}.
\end{proof}

Lemma \ref{minsker-thm} indicates that $\mathcal{S}$ in our setup can be chosen as 
\begin{align*}
(\hat{\mu} - \mu)'(\hat{\mu} - \mu) \leq r_n^2,
\end{align*}
or
\begin{align*}
(\hat{\mu}_\ast - \mu)'(\hat{\mu}_\ast - \mu) \leq r_{n,\ast}^2,
\end{align*}
either of which gives the bound with probability at least $1-2\delta$.
The former produces a tighter bound than the latter only when the dimension of $\mu$ is sufficiently high. 
Note that \eqref{Minsker-ineq} requires the existence of fourth moments due to the fact that $\text{tr}(\Sigma)$ is estimated by the median of means as well. The inequality in \eqref{Minsker-ineq} is a relatively mild condition when $n$ is  large.

\section{An Additional Empirical Example}\label{sec:Exmp}

\subsection{Bounding the Average of Log Weekly Wages}

Let $Y_i^\ast$ denote the log weekly wage and $Z_i$ the years of schooling. 
Suppose that 
\begin{align}\label{model-np} 
Y_{i}^\ast =  h( Z_{i} ) + e_{i},
\end{align}
where $h: \mathbb{R} \mapsto \mathbb{R}$ is an unknown function and  the error term $e_{i}$ satisfies $E[e_{i}|Z_i] = 0$ almost surely. 

In the 1960 and 1970 U.S. censuses, 
hours of work during a week  are measured in brackets. As a result, weekly wages are only available in terms of interval data.     
In view of this, assume that for each individual $i$, we do not observe $Y_i^\ast$, but only the interval data $[L_i, U_i]$ such that  $Y_i^\ast \in [L_i, U_i]$ along with $Z_i$. Here, $L_i$ and $U_i$ are random variables.

Assume that the support of $Z_{i}$ is finite, that is,  $\mathcal{Z} \in \{z_1, \ldots, z_J\}$.
Denote the values of $h(\cdot)$ on  $\mathcal{Z}$ by $\{ \psi_1, \ldots, \psi_J \}$. That is, $h(u) = \sum_{j=1}^J  \psi_j 1( u = z_j)$ for $u \in \mathcal{Z}$.
Suppose that the object of interest is the value of $\psi^\ast \equiv h(z^\ast)$, where  $z^\ast$ is not in the support of $Z_i$ but $z_{j-1} < z^\ast < z_j$ for some $j$.    This type of extrapolation problem is given as a motivating example in \citet[pp. 4-5]{Manski:book}.

To partially identify $\psi^\ast$, assume that $h(\cdot)$ is monotone non-decreasing. Specifically, we impose the monotonicity on $\mathcal{Z} \cup \{ z^\ast \}$.
That is, $h(z_1) \leq h(z_2)$ whenever $z_1 \leq z_2$ for any $z_1, z_2 \in  \mathcal{Z} \cup \{ z^\ast \}$.
In addition, we have the following inequality constraints:
\begin{align}\label{ineq-restriction} 
E[L_{i}|Z_{i}=z_j] \leq  \psi_j \leq E[U_{i}|Z_{i}=z_j]
\end{align}
for any $1 \leq j \leq J$. 
Note that \eqref{ineq-restriction} alone does not provide a bounded interval on $\psi^\ast$ since $z^\ast$ is not in $\mathcal{Z}$. The monotonicity assumption combined with \eqref{ineq-restriction} provides an informative bound on $\psi^\ast$.

To write the optimization problem in our canonical form,  let $\mu$ denote the population moments of the following $\bar{X}$:
\begin{align*}
n \bar{X} = 
\begin{pmatrix}
{\sum_{i=1}^{n} L_i 1(Z_{i}=z_1)} \\
 \vdots \\
  \sum_{i=1}^{n} L_i 1(Z_{i}=z_J) \\
 \sum_{i=1}^{n} U_i 1(Z_{i}=z_1) \\
 \vdots \\
 \sum_{i=1}^{n} U_i 1(Z_{i}=z_J) \\
 \sum_{i=1}^{n} 1(Z_{i}=z_1) \\
\vdots \\
\sum_{i=1}^{n} 1(Z_{i}=z_J)
\end{pmatrix}.
\end{align*}
Then, we can rewrite the constraints \eqref{ineq-restriction}  in a bilinear form: 
\begin{align}\label{ineq-restriction-bilinar} 
E[L_{i} 1(Z_{i}=z_j)] \leq  \psi_j E[ 1(Z_{i}=z_j)]  \leq E[U_{i} 1(Z_{i}=z_j)]
\end{align}
for any $1 \leq j \leq J$. 

\subsection{Empirical Results}\label{subsec:app:results}

We use a sample extract from the U.S. 2000 census to estimate $\psi^\ast$.
The sample is restricted to males who were 40 years old with educational attainment at least grade 10,
positive wages and positive hours of work.   
In the 2000 census,  variable \texttt{WKSWORK1} is  the hours of work in integer value, whereas 
\texttt{WKSWORK2} is weeks worked in intervals. 
As a parameter of interest, we focus on the average log wage for one year of college.
To illustrate our method, in our estimation procedure, we drop all observations with those with one year of college.
The resulting sample size is $n = 15,647$.

\begin{table}[htbp]
\caption{95\% Confidence Intervals}\label{tab:emp1} 
\begin{center}
\begin{tabular}{lc}
\hline\hline
 $\mathcal{S}$ & Bounds   \\ \hline
    \texttt{Box} & $[6.01, 7.86]$  \\
    \texttt{Chi-Square} & $[6.33, 6.90]$  \\  
    \texttt{Sub-Gaussian-1} & $[6.25, 7.10]$   \\
    \texttt{Sub-Gaussian-2} & $[2.62, 12.69]$   \\
    \texttt{Sub-Gaussian-3} & $[4.24, 11.77]$   \\      
    \texttt{Vector-Sub-Gaussian} & $[6.32, 6.87]$   \\   
     \texttt{Minsker} & $[2.60, 12.69]$   \\  
  \texttt{Oracle} & $[6.56, 6.61]$ \\     
    \hline
\end{tabular}
\end{center}
\end{table}%

In estimation, 
the latent 
 $Y_i^\ast$ is  $\log$(\texttt{INCWAGE}/\texttt{WKSWORK1}),
where \texttt{INCWAGE} is 
total pre-tax wage and salary income 
and
\texttt{WKSWORK1} is weeks worked in integer value.
We observe brackets $[L_i, U_i]$ of  $\log$(\texttt{INCWAGE}/\texttt{WKSWORK2}), where \texttt{WKSWORK2} is weeks worked in intervals:
[1,13], [14,26], [27,39], [40,47], [48,49], [50,52].
The brackets $[L_i, U_i]$ are random because the numerator \texttt{INCWAGE} is random.
The set $\mathcal{Z}$ includes 7 different values: 10th grade, 11th grade, 12th grade, 1 year of college,
2 years of college, 4 years of college and 5 or more years of college. 
Since we drop observations with those with 1 year of college, 
we have a missing data problem as described in the previous section.

Table \ref{tab:emp1}  shows  estimation results. The nominal level of coverage is 0.95. 
Inference methods are the same as those used in Section~\ref{sec:Exmp:new}.
The \texttt{Oracle} refers to the confidence interval of  the average log wage for one year of college
using the actual value of weekly wages for those with one year of college.
As in the empirical example in the main text, 
the \texttt{Chi-Square} and \texttt{Vector-Sub-Gaussian} intervals are tighter than the other intervals and are only slightly wider than the \texttt{Oracle} interval.


{
\section{Additional Monte Carlo Experiments}\label{sec:Exmp:appendix}

In this part of the appendix, we report the results of additional Monte Carlo experiments.
To check sensitivity to sub-Gaussian assumptions, we now generate the outcome variable $Y$ from 
\begin{align}\label{FH-iv-dgp-appendix}
Y = \mathbb{E}[ \phi(D) | Z ] + \frac{V}{52}; \ \ V \sim \frac{\chi^2(5) - 5}{\sqrt{10}}.
\end{align}
In the main text, we used the standard normal random variable for $V$; here, it is a standardized chi-square random variable, which is not sub-Gaussian but sub-exponential. 
There were 100 replications in each Monte Carlo experiment.
Figure~\ref{fig-mc:appendix} and Table~\ref{MC:tab:appendix} show that the experimental results are similar to those in the main text.
}

\begin{figure}[htbp]
\caption{Distributions of $\hat{J}_{+}$ with a chi-square disturbance term}\label{fig-mc:appendix}
\begin{center}
\includegraphics[scale=0.5, angle=270]{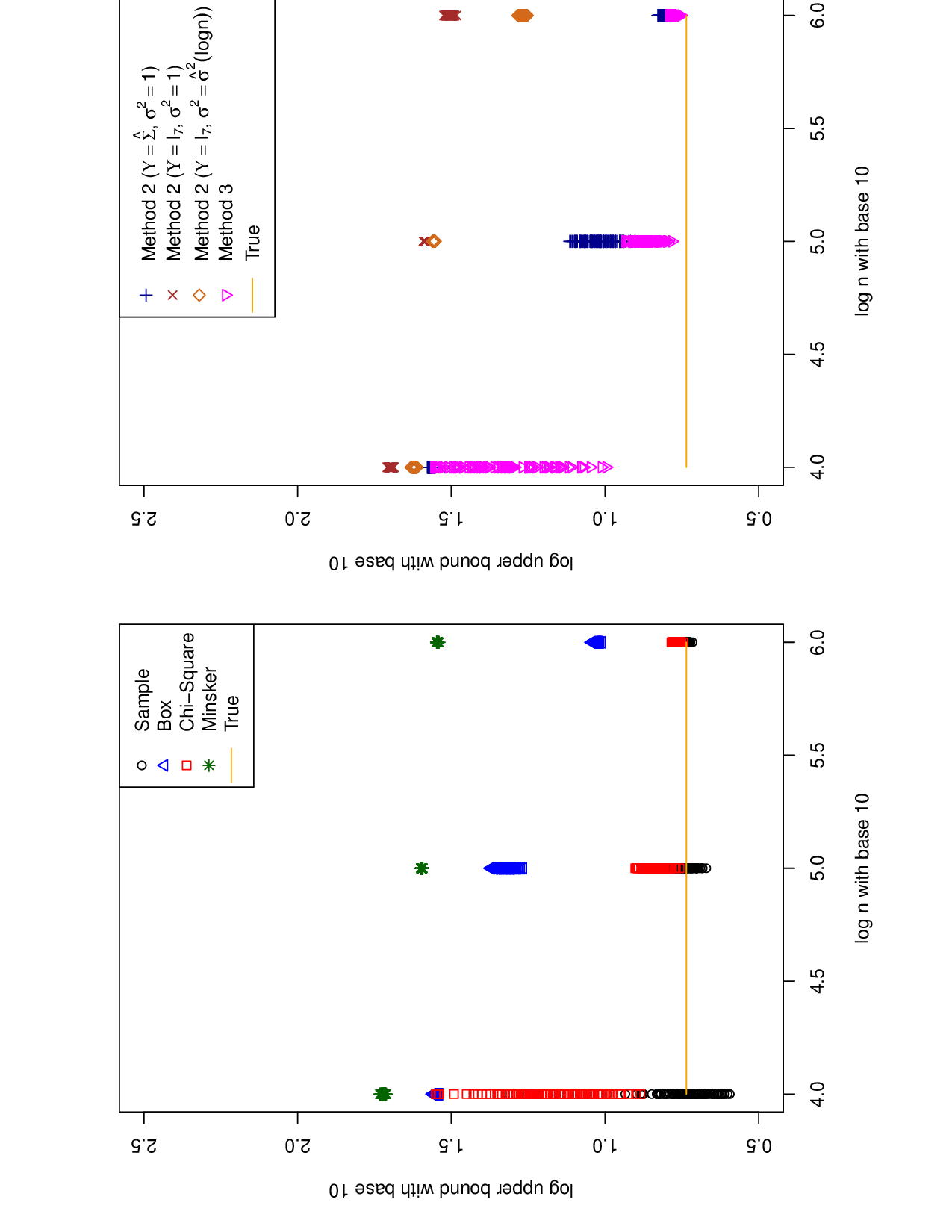}
\end{center}
\end{figure}

\begin{table}[htbp]
\caption{Results of Monte Carlo experiments with a chi-square disturbance term}\label{MC:tab:appendix}
\begin{center}

\begin{tabular}{lrrr}
 \hline\hline
 & $n = 10^4$ &  $10^5$ &  $10^6$ \\ 
  \hline
\multicolumn{4}{l}{Averages of $\hat{J}_{+}$} \\  
\texttt{Sample} & 5.45 & 5.48 & 5.44 \\ 
  \texttt{Box} & 35.56 & 21.20 & 10.69 \\ 
  \texttt{Chi-Square} & 16.87 & 6.90 & 5.83 \\ 
  \texttt{Method 2 ($\Upsilon = \hat{\Sigma}$, $\sigma^2 = 1$)} & 36.83 & 10.56 & 6.44 \\ 
  \texttt{Method 2 ($\Upsilon = I_7$, $\sigma^2 = 1$)} & 49.71 & 38.73 & 31.95 \\ 
  \texttt{Method 2 ($\Upsilon = I_7$, $\sigma^2 = \hat{\sigma}^2 (\log n)$)} & 41.74 & 36.08 & 18.49 \\ 
  \texttt{Method 3} & 24.79 & 7.37 & 5.94 \\ 
  \texttt{Minsker} & 52.59 & 39.47 & 35.15 \\ 
   \hline
\multicolumn{4}{l}{Proportions that $\hat{J}_{+} \geq J_{+} = 5.438$} \\  
\texttt{Sample} & 0.47 & 0.55 & 0.48 \\ 
  All other methods & 1.00 & 1.00 & 1.00  \\ 
 \hline   
\end{tabular}
\end{center}
\end{table}%

\newpage

{\singlespacing
\bibliographystyle{econometrica}
\bibliography{naio}
}

\end{document}